\newtheorem{theorem}{Theorem}[section]
\newtheorem*{theorem*}{Theorem}
\newtheorem{proposition}[theorem]{Proposition}
\newtheorem*{proposition*}{Proposition}
\newtheorem{corollary}[theorem]{Corollary}
\newtheorem*{corollary*}{Corollary}
\renewcommand\thetheorem{\arabic{section}.\arabic{theorem}} 
\theoremstyle{definition}
\newtheorem{definition}[theorem]{Definition}
\newtheorem*{lemma*}{Lemma}
\newtheorem*{problem*}{Problem}
\DeclareMathOperator{\Tr}{Tr}
\newcommand{\com}[1]{\textcolor{red}{\textbf{[[} #1 \textbf{]]}}}
\begin{document}


\title{Simulation of Qubit Quantum Circuits via Pauli Propagation}

\author{Patrick Rall}
\author{Daniel Liang}
\author{Jeremy Cook}
\author{William Kretschmer}%
\affiliation{Quantum Information Center, University of Texas at Austin}

\date{\today}

\begin{abstract}
We present novel algorithms to estimate outcomes for qubit quantum circuits. Notably, these methods can simulate a Clifford circuit in linear time without ever writing down stabilizer states explicitly. These algorithms outperform previous noisy near-Clifford techniques for most circuits. We identify a large class of input states that can be efficiently simulated despite not being stabilizer states. The algorithms leverage probability distributions constructed from Bloch vectors, paralleling previously known algorithms that use the discrete Wigner function for qutrits.

\end{abstract}

\maketitle


\section{Introduction}

Simulating quantum circuits on classical hardware requires large computational resources. Near-Clifford simulation techniques extend the Gottesmann-Knill theorem to arbitrary quantum circuits while maintaining polynomial time simulation of stabilizer circuits. Their runtime analysis gives rise to measures of non-Cliffordness, such as the robustness of magic \cite{resource}, magic capacity \cite{seddon}, sum-negativity \cite{vmge13}. These algorithms evaluate circuits by estimating the mean of some probability distribution via the average of many samples, a process with favorable memory requirements and high parallelizability.

Previous work \cite{bennink, resource} gives an algorithm based on quasiprobability distributions over stabilizer states; we refer to this algorithm as `stabilizer propagation'. In contrast to techniques based on stabilizer rank \cite{gosset, extent}, stabilizer propagation is appealing for simulation of NISQ-era hardware \cite{nisq} because it can simulate noisy channels. Moreover, depolarizing noise decreases the number of samples required, measured by robustness of magic and the magic capacity. However, bounding the number of required samples can be expensive: For example, the magic capacity of a 3-qubit channel is defined as a convex optimization problem over 315,057,600 variables \cite{enums, seddon}.

Pashayan et al.\ \cite{pash} showed that in qutrit systems, the discrete Wigner function provides a simpler simulation strategy. This strategy takes linear time to sample, and the number of samples required (measured by the sum-negativity) is tractable to compute for small systems. However, discrete Wigner functions do not yield efficient simulation of qubit Clifford circuits \cite{rbdobv15}.

Our main result is that Bloch vectors yield simulation strategies for qubit circuits, similar to those in Pashayan et al. We present two algorithms, which we individually call \textbf{Schr\"odinger propagation} and \textbf{Heisenberg propagation}, and collectively call \textbf{Pauli propagation techniques}. They have several surprising properties:
\begin{enumerate}
    \item They yield linear time simulation for qubit Clifford circuits without writing down stabilizer states.\vspace{1mm}
    \item Schr\"odinger propagation can efficiently simulate a new family of quantum states called `hyper-octahedral states' which is significantly larger than the set of stabilizer mixtures in terms of the Hilbert-Schmidt measure. \item The runtime of Heisenberg propagation does not depend on the input state at all.
    \item Non-Cliffordness in both algorithms is measured via the stabilizer norm, which is a lower bound to the robustness of magic. This gives Pauli propagation techniques a strictly lower runtime than stabilizer propagation for all input states and most channels.
\end{enumerate}

\onecolumngrid
\vfill


\begin{table}[b]
\setlength{\tabcolsep}{6pt}
\renewcommand{\arraystretch}{1.1}

    \textbf{Table: Circuit components that can be simulated efficiently}
    \vspace{2mm}\\

\begin{tabular}{r||l | l | l}
 & \multicolumn{1}{c}{Previous work \cite{bennink, resource}} & \multicolumn{2}{c}{This work: Pauli propagation algorithms} \\
 & \textbf{Stabilizer propagation}  & \textbf{Heisenberg propagation}  & \textbf{Schr\"odinger propagation} \\ \hline \hline
    What input states &   \multirow{2}{*}{Stabilizer mixtures} & Any separable state,  & Hyper-octahedral states,  \\
are efficient to simulate? & &  Stabilizer mixtures & Noisy states reduce runtime  \\ \hline
    Depolarized $T$ gate  & Efficient when fidelity $ \lessapprox 0.551$  &  \multicolumn{2}{c}{Efficient when fidelity $\leq 2^{-1/2} \approx 0.707$} \\ \hline
    Reset channels   & Pauli reset channels efficient   & All reset channels efficient    & \multirow{2}{*}{Generally inefficient} \\ \cline{1-3}
    Adaptive gates    & Adaptive Cliffords efficient & Generally inefficient  & \\ \cline{1-4}
    Marginal observables &  \multirow{2}{*}{Efficient}  & \multirow{2}{*}{Efficient}  & \multirow{2}{*}{Generally inefficient}  \\\cline{1-1} 
Pauli observables     &        & & \\ 
\end{tabular}
    \caption*{Summary of the results of Section III. All algorithms take polynomial time to sample, but the number of samples scales exponentially in the number of \emph{inefficient} circuit components. \emph{Efficient} components do not increase runtime.}
\end{table}
\twocolumngrid

\hspace{5mm}\\

We describe these algorithms in Section II. In Section III we perform a detailed comparison of Schr\"odinger, Heisenberg and stabilizer propagation which we summarize in the table below. In Section IV we briefly discuss the implications of the algorithms for resource theories of Cliffordness. This work is intended to supersede \href{https://arxiv.org/abs/1804.05404}{quant-ph/1804.05404}.

\section{\label{sec:intro}Algorithms}
In this section we describe two algorithms for estimating the expectation value of observables at the end of a quantum circuit. Schr\"odinger propagation involves propagating states forward though the circuit and taking inner products with the final observables. Heisenberg propagation involves propagating observables backward though the circuit and taking inner products with the initial states. At every step, both procedures sample from an unbiased estimator for the propagated state/observable that is distribution over Pauli matrices. 

\subsection{Sampling Pauli Matrices}
The workhorse of both protocols is a subroutine that samples a random scaled tensor product of Pauli matrices as a proxy for an arbitrary $n$-qubit Hermitian matrix $A$. Let $\mathcal{P}_n = \{\sigma_1 \otimes \cdots \otimes \sigma_n : \sigma_i \in \{I, \sigma_X, \sigma_Y, \sigma_Z\}\}$ denote the set of $n$-qubit Pauli matrices. We define a pair of completely dependent random variables $\hat \sigma \in \mathcal{P}_n$ and $\hat c \in \mathbb{R}$ that satisfy $\mathbb{E}\left[\hat c \cdot \hat \sigma\right] = A$:
\begin{align}
\label{eq:sigma_hat}\hat \sigma(A) &= \sigma \text{ with prob. } \frac{\left|\text{Tr}(\sigma A)\right|}{ 2^n \cdot \mathcal{D}(A)} \text{ for each }\sigma \in \mathcal{P}_n,\\
\label{eq:c_hat}\hat c(A) &= \mathrm{sign}\left( \mathrm{Tr}(\hat \sigma(A) A)\right) \cdot \mathcal{D}(A).
\end{align}

The quantity $\mathcal{D}(A)$ is a normalization constant that makes $\frac{\left|\text{Tr}(\sigma A)\right| }{ 2^n \cdot \mathcal{D}(A)}$  for $\sigma \in \mathcal{P}_n$ a probability distribution.

\begin{definition}
The \textbf{stabilizer norm} $\mathcal{D}(A)$ is:
\begin{equation}\label{eq:stabnorm}\mathcal{D}(A) = \frac{1}{2^n} \sum_{\sigma \in \mathcal{P}_n} \left|\mathrm{Tr}(\sigma A) \right|.
\end{equation}
\end{definition}

The product of the random variables $\hat c(A) \cdot \hat \sigma(A)$ is an unbiased estimator for $A$ because the Pauli matrices form an operator basis for Hermitian matrices:
\begin{align}
\mathbb{E}[\hat c(A) \cdot \hat \sigma(A)] &= \sum_{\sigma \in \mathcal{P}_n} \frac{\left|\text{Tr}(\sigma A)\right|}{ 2^n \cdot \mathcal{D}(A)}  \cdot \text{sign}\left( \text{Tr}(\sigma A) \right) \cdot \mathcal{D}(A)\cdot \sigma\nonumber\\
&= \sum_{\sigma \in \mathcal{P}_n} \frac{\text{Tr}(\sigma A)}{2^n} \cdot \sigma = A.
\end{align}

The time to compute the probabilities and sample from the distributions scales exponentially with the number of qubits of $A$. We say $A$ has \textbf{tensor product structure} if it can be written as a tensor product of several operators, each of which acts on a constant number of qubits:
$$A = A_1\otimes A_2 \otimes \cdots$$

\noindent Then one can observe that:
$$\hat\sigma(A) = \hat\sigma(A_1)\otimes \hat\sigma(A_2)\cdots \text{ and } \hat c(A) = \hat c(A_1) \cdot \hat c(A_2)\cdots$$
Since each $A_i$ acts on a constant number of qubits, each of the probability distributions for $\hat \sigma(A_i), \hat c(A_i)$ can be computed and sampled from in constant time. So $\hat\sigma(A)$ and $\hat c(A)$ can be sampled from in linear time if $A$ has tensor product structure, even if $A$ acts on many qubits.

\subsection{Schr\"odinger Propagation}

Suppose we want to apply a sequence of channels $\Lambda_1,\ldots,\Lambda_k$ to an $n$-qubit state $\rho_0$. These operations are given as a quantum circuit, so $\rho_0$ has tensor product structure and each of the $\Lambda_i$ non-trivially act on a constant-size subset of the qubits. Let $\rho_i$ be the state after applying the first $i$ channels:
\begin{equation}\rho_i = \Lambda_i(\Lambda_{i-1}(\cdots\Lambda_1(\rho_0)))\end{equation} 

We are given an observable $E$ which also has tensor product structure. We want to estimate the expectation of $E$ on the final state:
\begin{equation}\langle E\rangle  = \text{Tr}\left( E \rho_k \right)  = \text{Tr}\left( E \Lambda_k(\Lambda_{k-1}(\cdots\Lambda_1(\rho_0))) \right)\label{eq:schrgoal}\end{equation}

We apply the sampling procedure defined by \eqref{eq:sigma_hat} and \eqref{eq:c_hat}  to $\rho_0$. We define $\hat \sigma(\rho_0) = \hat\sigma_0$ and $\hat c(\rho_0) = \hat c_0$. Their product $\hat c_0 \cdot \hat \sigma_0$ is an unbiased estimator for $\rho_0$.

Given an unbiased estimator $\hat c_i \cdot \hat \sigma_i$ for $\rho_i$, we will obtain an unbiased estimator $\hat c_{i+1} \cdot \hat \sigma_{i+1}$ for $\rho_{i+1}$. Apply $\Lambda_{i+1}$ to $\hat c_i \cdot \hat \sigma_i$ and use linearity of $\Lambda_{i+1}$:
$$  \mathbb{E} \left[ \Lambda_{i+1}(\hat c_i \cdot \hat \sigma_i)  \right] =\Lambda_{i+1}(\mathbb{E} \left[ \hat c_i \cdot \hat \sigma_i \right])  = \rho_{i+1}$$

We have $\Lambda_{i+1}(\hat c_i \cdot \hat \sigma_i)  = \hat c_i \cdot   \Lambda_{i+1}( \hat \sigma_i) $. Since $\Lambda_{i+1}$ acts non-trivially on a constant-size subset of the qubits, $\Lambda_{i+1}( \hat \sigma_i)$ has tensor product structure and we can sample using \eqref{eq:sigma_hat} and \eqref{eq:c_hat} again. Let:
\begin{equation}\hat \sigma_{i+1} = \hat \sigma\left(\Lambda_{i+1}( \hat \sigma_i)\right) \text{ and } \hat c_{i+1} = \hat c_{i}\cdot \hat c\left(\Lambda_{i+1}( \hat \sigma_i)\right)  \end{equation}

Now we have $\hat c_{i+1} \cdot \hat \sigma_{i+1}$, an estimator for $\rho_{i+1}$, and can recursively obtain $\hat c_{k} \cdot \hat \sigma_{k}$ for $\rho_k$. Since $E$ and $\hat\sigma_k$ have tensor product structure, we can efficiently obtain their trace inner product. The protocol yields a sample from the distribution in time linear in $k+n$:
\begin{equation}\text{Output: sample from } \hat c_{k} \cdot \text{Tr}( \hat \sigma_k E ) \label{eq:shrout}\end{equation}
This distribution estimates the target quantity:
$$\mathbb{E}\left[\hat c_{k} \cdot \text{Tr}( \hat \sigma_k E )  \right] = \text{Tr}( \mathbb{E}\left[\hat c_{k} \cdot \hat \sigma_k\right] E ) = \text{Tr}\left(\rho_k E  \right) = \langle E\rangle  $$

We estimate the mean of $ \hat c_{k} \cdot \text{Tr}( \hat \sigma_k E ) $ by taking the average of $N$ samples. The Hoeffding inequality \cite{hoeffding} provides a sufficient condition on $N$ for an additive error $\varepsilon$ with probability $1-\delta$ in terms of the range of the distribution:
\begin{equation}N \geq \frac{1}{2\varepsilon^2} \cdot \ln \frac{2}{\delta} \cdot (\text{range})^2\label{eq:hoeff}\end{equation}
The range of the output distribution is bounded by twice the maximum magnitude of the output distribution \eqref{eq:shrout}.
\begin{equation} \text{range} \leq 2 \cdot \left|\hat c_k \cdot \mathrm{Tr}(\hat \sigma_k E) \right| \leq 2\cdot  |\hat c_k| \cdot \max_{\sigma \in \mathcal{P}_n}\left|\mathrm{Tr}(\sigma E)\right|\end{equation}

\noindent Observe that $\hat c(A) = \pm \mathcal{D}(A)$, so:
\begin{align}
    |\hat c_{i+1}| &= |\hat c_{i}|\cdot| \hat c\left(\Lambda_{i+1}( \hat \sigma_i)\right)| \nonumber\\
    &= |\hat c_{i}|\cdot \mathcal{D}(\Lambda_{i+1}(  \hat \sigma_i)) \nonumber \\
    &\le |\hat c_{i}| \cdot \max_{\sigma \in \mathcal{P}_n}\mathcal{D}(\Lambda_i(\sigma))\label{eq:ccost}
\end{align}
Intuitively, $\mathcal{D}$ measures the ``cost'' of a Hermitian matrix in this algorithm. The above motivates a corresponding notion of the ``cost'' of a channel:

\begin{definition}
The \textbf{channel stabilizer norm} $\mathcal{D}(\Lambda)$ is defined by:
\begin{equation} \mathcal{D}(\Lambda) = \max_{\sigma \in \mathcal{P}_n} \mathcal{D}(\Lambda(\sigma))\end{equation}
\end{definition}

\noindent Expanding the recursion in (\ref{eq:ccost}) we obtain the bound:
\begin{equation}\left|\hat c_k \cdot \mathrm{Tr}(\hat \sigma_k E) \right|  \le \underbrace{\vphantom{\prod_{i=1}^k}\mathcal{D}(\rho_0)}_{(1)} \cdot \underbrace{\prod_{i=1}^k \mathcal{D}(\Lambda_i)}_{(2)} \cdot \underbrace{\vphantom{\prod_{i=1}^k}\left|\max_{\sigma \in \mathcal{P}_n}\mathrm{Tr}(\sigma E)\right|}_{(3)}\end{equation}
The number of samples $N$ scales with the square of the above quantity. Thus, the cost of Schr\"odinger propagation on a circuit breaks into three parts: (1) the cost of the initial state, (2) the cost of each channel, and (3) the cost of the final observable.

Here are two observations:
\begin{itemize}
    \item Say $\rho_0 = \rho^{\otimes m}$, so $\mathcal{D}(\rho_0) = \mathcal{D}(\rho)^m$. For many $\rho$ with short Bloch vectors, the cost $\mathcal{D}(\rho)$ can be strictly less than 1, meaning more copies of $\rho$ result in an exponential runtime improvement from cost term (1).
    \item Often we are interested in observables $E_\text{local}$ that act only on a small subset of the output qubits. Then $E$ is a tensor product of linearly many identity matrices and $E_\text{local}$, resulting in an exponential runtime blowup from cost term (3).
\end{itemize}

Loosely speaking, Schr\"odinger propagation works well when the input qubits are noisy and all output qubits are measured, like some supremacy circuits \cite{suprem}.

\subsection{Heisenberg Propagation}

Heisenberg propagation involves propagating the observable $E$ backwards through the circuit and taking the inner product with the initial state $\rho_0$. To do so we utilize the \textbf{channel adjoint} $\Lambda^\dagger$ which satisfies:  
\begin{equation}\text{Tr}(E \Lambda(\rho)) = \text{Tr}(\Lambda^\dagger(E)\rho)\end{equation}


\noindent Applying this to \eqref{eq:schrgoal}, our goal is to estimate:
\begin{align}\langle E\rangle  = \text{Tr}\left( \rho_0 \Lambda^{\dagger}_1(\cdots\Lambda^{\dagger}_{k-1}(\Lambda^{\dagger}_k(E))) \right) =  \text{Tr}\left( \rho_0 E_1 \right) \nonumber\\
    \text{where }E_i = \Lambda^{\dagger}_{i}(\Lambda^{\dagger}_{i+1}(\cdots\Lambda^{\dagger}_{k-1}(\Lambda^{\dagger}_k(E)))) 
\end{align}

For Heisenberg propagation we will define $\hat c_i, \hat \sigma_i$ differently from Schr\"odinger propagation. We use the sampling procedure defined by \eqref{eq:sigma_hat} and \eqref{eq:c_hat} and obtain $\hat \sigma(E) = \hat \sigma_{k+1}$ and $\hat c(E) = \hat c_{k+1}$. Then $\hat c_{k+1} \cdot \hat \sigma_{k+1}$ is an unbiased estimator for $E$.

With an unbiased estimator $\hat c_{i+1} \cdot \hat \sigma_{i+1}$ for $E_{i+1}$ we can obtain an unbiased estimator  $\hat c_{i} \cdot \hat \sigma_{i}$ for $E_{i}$ from $\Lambda^{\dagger}_i(\hat c_{i+1} \cdot \hat \sigma_{i+1}) = \hat c_{i+1} \cdot \Lambda^{\dagger}_i( \hat \sigma_{i+1})$. Since $\Lambda^{\dagger}_i( \hat \sigma_{i+1})$ has tensor product structure we can sample using \eqref{eq:sigma_hat} and \eqref{eq:c_hat}, and obtain:
\begin{equation}\hat \sigma_{i} = \hat \sigma(\Lambda_i^\dagger(\hat\sigma_{i+1}))\text{ and } \hat c_{i} = \hat c_{i+1} \cdot \hat c(\Lambda_i^\dagger(\hat\sigma_{i+1}))\end{equation}

This operation is iterated until we obtain $\hat c_{1}\cdot \hat \sigma_1$, an unbiased estimator for $E_1$. Since $\rho_0$ has tensor product structure we can compute the trace inner product and produce a sample, again in time linear in $k+n$:
\begin{equation}\text{Output: sample from } \hat c_{1} \cdot \text{Tr}( \hat \sigma_1 \rho_0 ) \label{eq:heisout}\end{equation}
This estimates the target quantity:
$$\mathbb{E}\left[\hat c_{1} \cdot \text{Tr}( \hat \sigma_1 \rho_0 )  \right] = \text{Tr}( \mathbb{E}\left[\hat c_{1} \cdot \hat \sigma_1\right] \rho_0 ) = \text{Tr}\left(E_1 \rho_0  \right) = \langle E\rangle  $$

To bound the number of samples $N$ we bound the maximum magnitude of \eqref{eq:heisout} and utilize Hoeffding's inequality \eqref{eq:hoeff}. Since $\rho_0$ is a quantum state, we always have $\max_{\sigma \in \mathcal{P}_n}\left|\mathrm{Tr}(\sigma \rho_0)\right| = 1$ since the eigenvalues of $\sigma$ are $\pm 1$. This leaves the recursion relation:

\begin{align}
  \left|\hat c_1 \cdot \mathrm{Tr}(\hat \sigma_1 \rho_0) \right| \leq |\hat c_1| &= |\hat c_{i+1}|\cdot \left| \hat c\left(\Lambda^\dagger_{i}( \hat \sigma_{i+1})\right)\right| \nonumber\\
    &= |\hat c_{i+1}|\cdot \mathcal{D}(\Lambda^\dagger_{i}( \hat \sigma_{i+1})) \nonumber \\
    &\le |\hat c_{i+1}| \cdot \max_{\sigma \in \mathcal{P}_n}\mathcal{D}(\Lambda^\dagger_i(\sigma))\nonumber\\
    &= |\hat c_{i+1}| \cdot \mathcal{D}(\Lambda_i^\dagger) \label{eq:hccost}
\end{align}

\noindent Expanding the recursion we obtain the bound:
\begin{equation}\left|\hat c_1 \cdot \mathrm{Tr}(\hat \sigma_1 \rho_0) \right|  \le \underbrace{\vphantom{\prod_{i=1}^k}\mathcal{D}(E)}_{(1)} \cdot \underbrace{\prod_{i=1}^k \mathcal{D}(\Lambda^\dagger_i)}_{(2)}   \end{equation}

The number of samples $N$ scales with the square of the cost of the observable (1) and the cost of channel adjoints (2), and is independent of the initial state.

Loosely speaking, Heisenberg propagation is efficient for \emph{any} separable input state or stabilizer mixture and supports a wider range of observables than Schr\"odinger propagation. However, it cannot capitalize on particularly noisy input states for a runtime improvement.

A version of Heisenberg propagation appears in \cite{sampling}, where they restrict operations to Clifford unitaries. Our work generalizes the technique to arbitrary quantum channels.

\section{\label{sec:states}Efficient Circuit Components}

In this section we study which input states, channels and observables (collectively `circuit components') can be simulated by Schr\"odinger, Heisenberg and stabilizer propagation without increasing runtime. This viewpoint helps address the practical question: ``Given a particular quantum circuit, which near-Clifford algorithm is best?'' 

Straightaway, if the quantum circuit is unitary then stabilizer rank techniques \cite{extent} are the best choice due to their superior accuracy and runtime. The primary advantage of propagation algorithms is their ability to support arbitrary circuit components with noise, measurement, and adaptivity. Despite their flexibility, the propagation algorithms vary significantly in their performance.

Since the number of samples scales as the product of the square of the cost of the components, a component occurring linearly many times with cost $>1$ demands exponential runtime. In the following, when we say an algorithm \textbf{supports} or \textbf{can handle} a component, we mean that the cost of the component is $\leq 1$, although the protocols can be applied to any component possibly inefficiently.

\subsection{Efficiency of Stabilizer Propagation}

For a self-contained description of stabilizer propagation see \cite{bennink, resource, seddon}. Just as the algorithms in section II decompose input states into a weighted sum of Pauli matrices, stabilizer propagation decomposes input states into a weighted sum of stabilizer states. A sampling process identical to equations \eqref{eq:sigma_hat} and \eqref{eq:c_hat} results in the number of samples required to be proportional to the square of the following normalization constant:

\begin{definition}
The \textbf{robustness of magic} $\mathcal{R}(\rho)$ of an $n$-qubit state $\rho$ is the outcome of a convex optimization program over real vectors $\vec q$:
\vspace{-1mm}
$$\mathcal{R}(\rho) = \min_{\vec q} \sum_i \lvert q_i \rvert \text{ s.t. } \rho = \sum_i q_i \ket{\phi_i} \bra{\phi_i} \text{ and } \sum_i q_i = 1,$$
\vspace{-4mm}

\noindent where $\{\ket{\phi_i}\}$ are the $n$-qubit stabilizer states. 

When $\mathcal{R}(\rho) = 1$ (the minimum value) then $\rho$ is a \textbf{stabilizer mixture}, since then the vector $\vec q$ is a probability distribution.
\end{definition}

Due to the sheer number of stabilizer states, evaluating $\mathcal{R}(\rho)$ for even small $n$ is very expensive. As stated in \cite{bennink}, evaluating the cost function for 3-qubit unitaries is impractical, although the performance can be improved for diagonal gates \cite{seddon}.

The performance of stabilizer propagation gives a lens for the non-Cliffordness of channels, studied extensively in \cite{seddon}. In the appendix, we expand on this work by modifying the protocol to support all \textbf{postselective channels} which include all trace preserving channels and all `reasonable' non-trace-preserving channels. There, we prove the following theorem:\\

\begin{theorem}\label{thm:stabprop}
Let $\Lambda$ be a postselective channel and let $\bar\phi_\Lambda$ be the channel's normalized Choi state. $\Lambda$ does not increase the number of samples required for stabilizer propagation if and only if $\mathcal{R}(\bar\phi_\Lambda) = 1$.
\end{theorem}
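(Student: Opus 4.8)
The plan is to first pin down precisely what it means for $\Lambda$ to ``not increase the number of samples,'' then recast this operational condition as a structural property of $\Lambda$, and finally use the Choi--Jamio\l{}kowski isomorphism to translate that property into the stated condition on $\mathcal{R}(\bar\phi_\Lambda)$.

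First I would make the channel cost explicit. In direct analogy with the Pauli-propagation analysis of Section~II, stabilizer propagation samples a global stabilizer state $\hat\phi_i$ as a proxy for $\rho_i$, and a channel acting on a constant-size block is applied as $\Lambda\otimes I$ to that global state. The multiplicative cost incurred is therefore $\mathcal{R}\bigl((\Lambda\otimes I)(\hat\phi_i)\bigr)$, and the uniform per-channel cost is
$$\mathcal{R}(\Lambda) = \max_{\phi}\, \mathcal{R}\bigl((\Lambda\otimes I)(\phi)\bigr),$$
the maximum taken over stabilizer states $\phi$ of the full register. Since $\mathcal{R}\geq 1$ always, with equality exactly for stabilizer mixtures, ``$\Lambda$ does not increase the number of samples'' is equivalent to $\mathcal{R}(\Lambda)\le 1$, i.e.\ to the requirement that $(\Lambda\otimes I)(\phi)$ be a stabilizer mixture for every stabilizer input $\phi$ on any number of ancillary wires. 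This is precisely the statement that $\Lambda$ is \emph{completely stabilizer preserving}.

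With the condition reduced to complete stabilizer preservation, I would prove the two directions through the Choi state $\phi_\Lambda = (\Lambda\otimes I)(\ket{\Omega}\bra{\Omega})$. For the forward direction: the maximally entangled state $\ket{\Omega}$ is a stabilizer state, so if $\Lambda$ is completely stabilizer preserving then $\phi_\Lambda$ is a nonnegative combination of stabilizer projectors; rescaling to unit trace preserves this property, giving $\mathcal{R}(\bar\phi_\Lambda)=1$. For the reverse direction: if $\bar\phi_\Lambda = \sum_i p_i \ket{\psi_i}\bra{\psi_i}$ is a stabilizer mixture, then the channel--state recovery formula expresses $(\Lambda\otimes I)(\rho)$ as a gate teleportation through $\phi_\Lambda$ --- prepare the input against the Choi resource, project the input register in a stabilizer basis, and discard. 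Each $\ket{\psi_i}$ is a stabilizer state and this teleportation is a stabilizer operation, so it carries stabilizer inputs to stabilizer mixtures, whence $\Lambda$ is completely stabilizer preserving. This equivalence is the structural heart of the result and parallels known characterizations of completely stabilizer-preserving channels \cite{seddon}.

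The remaining work, and the main obstacle, is the postselective generalization. A postselective $\Lambda$ is only trace non-increasing, so $\phi_\Lambda$ is subnormalized and the image of a stabilizer state may itself be subnormalized, whereas $\mathcal{R}$, defined with the constraint $\sum_i q_i = 1$, applies to normalized states. I would therefore (i) verify that the appendix's extended sampling procedure assigns each postselective channel a cost that factors through $\mathcal{R}(\bar\phi_\Lambda)$ rather than through the raw trace of $\phi_\Lambda$, and (ii) check that the teleportation argument still yields valid (subnormalized) stabilizer mixtures when $\Lambda$ postselects onto stabilizer outcomes. The enabling observation is that rescaling $\phi_\Lambda$ by a positive scalar leaves invariant whether it is a stabilizer mixture, so the subnormalization is harmless for the equivalence once the cost is correctly defined; matching the operational cost to $\mathcal{R}(\bar\phi_\Lambda)$ in the non-trace-preserving regime is where the argument must be most careful.
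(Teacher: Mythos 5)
Your proposal is correct and follows essentially the same route as the paper's appendix proof: the forward direction applies $\Lambda$ to half of a Bell state (itself a stabilizer state) to exhibit $\bar\phi_\Lambda$ as a stabilizer mixture, and the reverse direction runs the channel--state recovery formula (your ``gate teleportation'') through a pure-stabilizer decomposition of $\bar\phi_\Lambda$, with the postselective subnormalization absorbed into the scalars $p_\Lambda$, $p_{\Gamma_i}$ exactly as the paper does. The only detail you leave as a to-do that the paper spells out is verifying that the resulting weights are nonnegative, sum to at most $1$ (the deficit being the `abort' probability), and that each stabilizer inner product is computable in polynomial time on the exponentially large register.
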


This establishes simple and flexible criteria for when a circuit component does not increase the runtime of stabilizer propagation: states $\rho$ are cheap when $\mathcal{R}(\rho) = 1$ and channels $\Lambda$ are cheap if $\mathcal{R}(\bar\phi_\Lambda) = 1$.

\subsection{Observables}

Observables encountered in practice are usually computational basis measurements, or operators with bounded norm that can be expressed as sums of not too many Pauli matrices.  Sometimes these observables are marginal: many of the qubits are not measured and traced out. Tracing out corresponds to measuring the identity observable, a kind of Pauli observable.

Stabilizer propagation outputs the inner product of the final observable with a stabilizer state. For all of the observables above, calculating inner products with stabilizer states is efficient: inner products with Pauli matrices can be obtained in $n^2$ time and marginal inner products with other stabilizer states in $n^3$ time \cite{tableau}. Crucially, these inner products remain bounded by the eigenvalues of the observable and thereby do not exponentially increase the range of the distribution.

Schr\"odinger propagation, which outputs the inner product with a Pauli matrix, does not have this property: although inner products between Pauli matrices are trivial to compute, the maximum inner product grows like $2^n$. Therefore, Schr\"odinger propagation is only viable when we are interested in the probability of measuring a particular state and only a constant number of discarded qubits. On the other hand, there exist contrived observables that only Schr\"odinger propagation can handle. If the observable is the tensor product of many non-stabilizer states, then neither Heisenberg propagation nor stabilizer propagation runs efficiently. (Indeed, calculating inner products of stabilizer states with tensor products of many non-stabilizer states is a key slow step in stabilizer rank techniques \cite{gosset, extent}.)

Heisenberg propagation applies the sampling method (1) (2) to the observable $E$, so cost is measured by $\mathcal{D}(E)$. The following facts, proven in \cite{resource}, show that Heisenberg propagation can handle the observables most common in quantum circuits.
\begin{proposition}
    $\mathcal{D}(\sigma) = 1$ for $\sigma \in \mathcal{P}_n$.
\end{proposition}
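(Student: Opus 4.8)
The plan is to compute $\mathcal{D}(\sigma)$ directly from its definition for an arbitrary Pauli matrix $\sigma \in \mathcal{P}_n$. Recall $\mathcal{D}(A) = \frac{1}{2^n}\sum_{\tau \in \mathcal{P}_n}|\mathrm{Tr}(\tau A)|$, so the entire claim reduces to evaluating the traces $\mathrm{Tr}(\tau \sigma)$ as $\tau$ ranges over all $n$-qubit Paulis. The key fact is the orthogonality relation for Pauli matrices under the trace inner product: for $\tau, \sigma \in \mathcal{P}_n$ we have $\mathrm{Tr}(\tau\sigma) = 2^n$ when $\tau = \sigma$ and $\mathrm{Tr}(\tau\sigma) = 0$ otherwise.

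First I would establish this orthogonality relation at the single-qubit level, where it is immediate: the single-qubit Paulis $I, \sigma_X, \sigma_Y, \sigma_Z$ satisfy $\mathrm{Tr}(\sigma_a \sigma_b) = 2\,\delta_{ab}$, since each Pauli squares to the identity (giving trace $2$) and distinct Paulis multiply to $\pm i$ times a third traceless Pauli. Then I would lift this to $n$ qubits using multiplicativity of the trace over tensor products, $\mathrm{Tr}(\tau \sigma) = \prod_{j=1}^n \mathrm{Tr}(\tau_j \sigma_j)$, so the product vanishes unless $\tau_j = \sigma_j$ in every tensor factor, i.e.\ unless $\tau = \sigma$, in which case it equals $2^n$.

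Plugging this into the definition collapses the sum to a single surviving term: $\mathcal{D}(\sigma) = \frac{1}{2^n}|\mathrm{Tr}(\sigma\sigma)| = \frac{1}{2^n}\cdot 2^n = 1$. This gives the result.

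There is essentially no obstacle here; the only thing to be careful about is noting that the relation is stated as proven in \cite{resource}, so a self-contained argument need only invoke Pauli trace-orthogonality, which is standard. If one wanted a fully elementary treatment, the single mild subtlety would be confirming that the product of two distinct single-qubit Paulis is always a \emph{traceless} matrix (a scalar multiple of a Pauli other than $I$), which is what forces the off-diagonal traces to vanish; this follows from the Pauli multiplication rules and requires no computation beyond the $4\times 4$ multiplication table.
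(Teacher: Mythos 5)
Your proof is correct and complete: the paper itself gives no proof of this proposition, deferring to \cite{resource}, and your direct computation via the Pauli trace-orthogonality relation $\mathrm{Tr}(\tau\sigma) = 2^n\,\delta_{\tau\sigma}$ is exactly the standard argument one would expect there. The sum in the definition of $\mathcal{D}$ collapses to the single term $\tau = \sigma$, giving $\mathcal{D}(\sigma) = 1$ as claimed, and your handling of the tensor-factor reduction and the tracelessness of products of distinct single-qubit Paulis is sound.
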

\begin{proposition}
    If $\ket{\phi}$ is a stabilizer state, then $\mathcal{D}(\ket{\phi}\bra{\phi}) = 1$.
\end{proposition}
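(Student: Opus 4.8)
The plan is to prove that $\mathcal{D}(\ket{\phi}\bra{\phi}) = 1$ for any stabilizer state $\ket{\phi}$ by directly analyzing the stabilizer group structure. Recall that an $n$-qubit stabilizer state is the unique $+1$ eigenstate of a maximal abelian subgroup $\mathcal{S} \subset \mathcal{P}_n$ (up to signs) of order $2^n$. The key structural fact I would invoke is the stabilizer projector identity
\begin{equation}
\ket{\phi}\bra{\phi} = \frac{1}{2^n} \sum_{g \in \mathcal{S}} g,
\end{equation}
where $\mathcal{S}$ is the signed stabilizer group containing exactly $2^n$ signed Pauli operators $g = \pm \sigma$. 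This expresses the density matrix as an equal-weight sum over a set of $2^n$ Paulis with coefficients $\pm 1$.

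From here the computation of $\mathcal{D}$ is essentially reading off the coefficients. Using orthogonality of Pauli matrices under the trace inner product, $\mathrm{Tr}(\sigma \sigma') = 2^n \delta_{\sigma,\sigma'}$, I would compute $\mathrm{Tr}(\sigma \ket{\phi}\bra{\phi})$ for each $\sigma \in \mathcal{P}_n$: this equals $\pm 1$ if $\pm\sigma$ lies in the stabilizer group $\mathcal{S}$, and $0$ otherwise. Therefore exactly $2^n$ of the $4^n$ Pauli matrices contribute, each with $|\mathrm{Tr}(\sigma \ket{\phi}\bra{\phi})| = 1$. Substituting into the definition of the stabilizer norm gives
\begin{equation}
\mathcal{D}(\ket{\phi}\bra{\phi}) = \frac{1}{2^n} \sum_{\sigma \in \mathcal{P}_n} \left|\mathrm{Tr}(\sigma \ket{\phi}\bra{\phi})\right| = \frac{1}{2^n} \cdot 2^n \cdot 1 = 1,
\end{equation}
which is the claim.

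The main obstacle, such as it is, is justifying the stabilizer projector identity and the counting of nonzero overlaps rather than any hard estimate. I would either cite the standard fact that the projector onto the stabilizer code equals $2^{-n}\sum_{g\in\mathcal{S}} g$, or derive it by noting that $\frac{1}{2^n}\sum_{g \in \mathcal{S}} g$ is a projector (the $g$ commute and square to identity, and the group closure makes the sum idempotent) of rank one onto $\ket{\phi}$. The counting step then rests on the fact that $\mathcal{S}$ is a maximal abelian subgroup of order exactly $2^n$, so precisely $2^n$ signed Paulis appear, each corresponding to a distinct unsigned $\sigma \in \mathcal{P}_n$; no two distinct group elements share the same underlying Pauli since $g$ and $-g$ cannot both lie in $\mathcal{S}$ (as $-I \notin \mathcal{S}$). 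Given Proposition that $\mathcal{D}(\sigma)=1$, one could alternatively observe that this result is the coherent-superposition analogue, but the projector expansion is the cleanest self-contained route.
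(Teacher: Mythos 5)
Your proof is correct: the stabilizer projector identity $\ket{\phi}\bra{\phi} = 2^{-n}\sum_{g\in\mathcal{S}}g$, Pauli orthogonality, and the count of exactly $2^n$ unsigned Paulis with overlap $\pm 1$ give $\mathcal{D}(\ket{\phi}\bra{\phi})=1$ immediately. The paper does not prove this proposition itself but defers to the cited reference \cite{resource}, where the argument is essentially the same group-expansion computation you give.
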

\begin{proposition}
    $\mathcal{D}$ is multiplicative: $\mathcal{D}(A \otimes B) = \mathcal{D}(A) \cdot \mathcal{D}(B)$.
\end{proposition}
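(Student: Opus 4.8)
The plan is to prove all three propositions directly from the definition of the stabilizer norm, $\mathcal{D}(A) = \frac{1}{2^n}\sum_{\sigma \in \mathcal{P}_n}\lvert\mathrm{Tr}(\sigma A)\rvert$, since each reduces to an elementary computation of the Pauli overlaps. For the first proposition, fix a Pauli matrix $\tau \in \mathcal{P}_n$. I would use the orthogonality relation $\mathrm{Tr}(\sigma \tau) = 2^n \delta_{\sigma,\tau}$, which holds because distinct tensor products of single-qubit Paulis are trace-orthogonal and each squares to the identity. Then the sum $\sum_\sigma \lvert\mathrm{Tr}(\sigma\tau)\rvert$ has exactly one nonzero term, equal to $2^n$, so $\mathcal{D}(\tau) = \frac{1}{2^n}\cdot 2^n = 1$.

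For the third proposition, multiplicativity, the key observation is that $\mathcal{P}_{m+n}$ factorizes as $\{\sigma \otimes \tau : \sigma \in \mathcal{P}_m,\ \tau \in \mathcal{P}_n\}$, and that the trace is multiplicative over tensor products: $\mathrm{Tr}\big((\sigma\otimes\tau)(A\otimes B)\big) = \mathrm{Tr}(\sigma A)\cdot\mathrm{Tr}(\tau B)$. I would substitute this into the definition, factor the absolute value across the product, and split the double sum into a product of two single sums, yielding $\mathcal{D}(A\otimes B) = \mathcal{D}(A)\cdot\mathcal{D}(B)$. This step is purely algebraic bookkeeping and presents no real difficulty.

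The second proposition, that $\mathcal{D}(\ket\phi\bra\phi) = 1$ for a stabilizer state $\ket\phi$, is the one requiring an actual structural fact and is where I expect the main obstacle to lie. The idea is that a stabilizer state $\ket\phi$ on $n$ qubits has a stabilizer group $S$ of order $2^n$, and $\ket\phi\bra\phi = \frac{1}{2^n}\sum_{g \in S} g$. The Pauli overlaps $\mathrm{Tr}(\sigma\ket\phi\bra\phi) = \braket{\phi|\sigma|\phi}$ are then $\pm 1$ when $\pm\sigma \in S$ (equivalently, when $\sigma$ is, up to sign, one of the $2^n$ stabilizer elements) and $0$ otherwise. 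Thus exactly $2^n$ of the $4^n$ Paulis contribute, each with absolute overlap $1$, giving $\mathcal{D}(\ket\phi\bra\phi) = \frac{1}{2^n}\cdot 2^n = 1$. The subtlety is justifying that the nonzero overlaps are precisely $\pm 1$ and that there are exactly $2^n$ of them; this rests on the standard facts that the stabilizer group is abelian of order $2^n$ with $-I \notin S$, and that for $\sigma \notin \{\pm g : g \in S\}$ the expectation $\braket{\phi|\sigma|\phi}$ vanishes because $\sigma$ anticommutes with some stabilizer generator. Since the paper cites \cite{resource} for these propositions, I would either invoke the stabilizer-group structure directly or defer to that reference for the detailed accounting.
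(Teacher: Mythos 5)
Your proof of the multiplicativity statement is correct and is the standard argument: the paper itself does not prove this proposition but defers to \cite{resource}, and your computation --- factorizing $\mathcal{P}_{m+n}$ as $\{\sigma\otimes\tau\}$, using $\mathrm{Tr}\bigl((\sigma\otimes\tau)(A\otimes B)\bigr)=\mathrm{Tr}(\sigma A)\,\mathrm{Tr}(\tau B)$, and splitting the double sum --- is exactly the argument that reference gives. No gaps.
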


\subsection{Hyper-Octahedral States}

A central observation of this work is that Pauli matrix decompositions can produce similar simulational power as decompositions over stabilizer states. Here we show that despite their simplicity, Pauli matrix decompositions are \emph{more} powerful with regards to the input state of the circuit.  The number of samples required for Heisenberg propagation does not depend at all on the input state (19). For Schr\"odinger propagation we observe:
\begin{enumerate}
\renewcommand{\theenumi}{\Alph{enumi}}
    \item there exist states supported by Schr\"odinger propagation unsupported by stabilizer propagation, and
    \item sufficiently depolarized states can actively decrease the number of samples required.
\end{enumerate}

From the definition of the stabilizer norm, $\mathcal{D}$ can be viewed as the L1 norm of the Bloch vector $\vec x$ of $\rho$. The equation $||\vec x||_1 \leq 1$ defines the surface and interior of a hyper-octahedron, motivating the following definition.

\begin{definition}
\textbf{Hyper-octahedral states} $\rho$ satisfy $\mathcal{D}(\rho) \leq 1$. These states do not increase the number of samples for Schr\"odinger propagation.
\end{definition}

To see (B), we simply observe that the interior of the octahedron satisfies $\mathcal{D}(\rho) = ||\vec x||_1 < 1$. $\mathcal{D}$ is minimized at the $n$-qubit maximally mixed state where $\mathcal{D}(I/2^n) = 1/2^n$. The following result, proved in \cite{resource}, shows that all stabilizer mixtures are hyper-octahedral.

\begin{proposition}
For states $\rho$, $\mathcal{D}(\rho) \leq \mathcal{R}(\rho)$.
\end{proposition}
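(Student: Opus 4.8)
The plan is to show $\mathcal{D}(\rho) \leq \mathcal{R}(\rho)$ by exploiting the fact that $\mathcal{D}$ is a norm and that it takes the value $1$ on stabilizer states. First I would observe that the robustness of magic is defined via a decomposition $\rho = \sum_i q_i \ket{\phi_i}\bra{\phi_i}$ over stabilizer states with $\sum_i q_i = 1$, and that $\mathcal{R}(\rho)$ is the infimum of $\sum_i |q_i|$ over all such decompositions. The strategy is to take \emph{any} feasible decomposition achieving (or approaching) the minimum and bound $\mathcal{D}(\rho)$ against it term by term.

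The key step is to verify that $\mathcal{D}$, as defined in \eqref{eq:stabnorm}, is genuinely a norm on Hermitian matrices: it is clearly nonnegative, homogeneous of degree one (so $\mathcal{D}(\lambda A) = |\lambda|\,\mathcal{D}(A)$), and satisfies the triangle inequality $\mathcal{D}(A + B) \leq \mathcal{D}(A) + \mathcal{D}(B)$, which follows directly from linearity of the trace and the ordinary triangle inequality applied inside the sum $\frac{1}{2^n}\sum_\sigma |\mathrm{Tr}(\sigma(A+B))|$. With the norm properties in hand, I would apply subadditivity and homogeneity to the decomposition:
\begin{equation}
\mathcal{D}(\rho) = \mathcal{D}\!\left(\sum_i q_i \ket{\phi_i}\bra{\phi_i}\right) \leq \sum_i |q_i|\,\mathcal{D}(\ket{\phi_i}\bra{\phi_i}).
\end{equation}

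The final step invokes Proposition 3.6 (that $\mathcal{D}(\ket{\phi}\bra{\phi}) = 1$ for any stabilizer state $\ket{\phi}$), which is already available in the excerpt. Substituting this in gives $\mathcal{D}(\rho) \leq \sum_i |q_i| \cdot 1 = \sum_i |q_i|$. Since this holds for every feasible decomposition, it holds in particular for the optimal one, yielding $\mathcal{D}(\rho) \leq \mathcal{R}(\rho)$. I do not expect any serious obstacle here: the only mild care needed is the standard point that $\mathcal{R}$ is defined as an infimum, so the bound must be stated for an arbitrary feasible $\vec q$ before passing to the minimum (or one argues that the minimum is attained, since the feasible set is a closed convex polytope in finitely many stabilizer-state coefficients). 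The entire argument is essentially the observation that $\mathcal{R}$ is the stabilizer-atomic norm while $\mathcal{D}$ is a weaker norm agreeing with it on the atoms, so $\mathcal{D}$ is dominated by $\mathcal{R}$.
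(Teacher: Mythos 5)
Your argument is correct, and it is essentially the standard proof: the paper does not prove this proposition itself but cites \cite{resource}, where the bound is obtained exactly as you do it --- $\mathcal{D}$ is a norm (homogeneous and subadditive, via the triangle inequality inside $\frac{1}{2^n}\sum_\sigma|\mathrm{Tr}(\sigma(A+B))|$) that equals $1$ on stabilizer projectors, so any feasible decomposition $\rho=\sum_i q_i\ket{\phi_i}\bra{\phi_i}$ gives $\mathcal{D}(\rho)\le\sum_i|q_i|$, and minimizing over the (finite, hence attained) set of decompositions yields $\mathcal{D}(\rho)\le\mathcal{R}(\rho)$. No gaps; your handling of the infimum/minimum point is the right level of care.
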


This fact classifies mixed states into three categories: stabilizer mixtures, non-stabilizer hyper-octahedral states, and magic states. For the single qubit, the first two categories coincide (the qubit stabilizer polytope is an octahedron). We plot a cross-section of the two-qubit Bloch sphere in FIG.~\ref{fig:ZIIZ}, showing that all of these categories are non-empty. FIG.~\ref{fig:states-pie} shows the relative quantity of these states according to the Hilbert-Schmidt measure. Stabilizer mixtures occupy a tiny fraction of all mixed states, whereas more than half are hyper-octahedral.

From the standpoint of quantum resource theories, hyper-octahedral states are interesting because they are similar to the `bound' states discussed in \cite{vcge12, hwve14, dh15, acb12}: they contain non-stabilizer mixed states that can be efficiently simulated. But unlike $\mathcal{R}$, tracing out qubits can increase $\mathcal{D}$. Hadamard eigenstates $\ket{H}$ are magic states that let Clifford circuits attain universal quantum computation, but $\ket{H} \otimes (I/2)$ is hyper-octahedral. Hyper-octahedral states are not bound for magic state distillation in the same sense as those in \cite{vcge12}: there are operations that can be simulated efficiently by stabilizer propagation that increase $\mathcal{D}$. Schr\"odinger propagation cannot simulate operations that increase $\mathcal{D}$.

\begin{figure}
    \centering

    \includegraphics[width=0.35\textwidth]{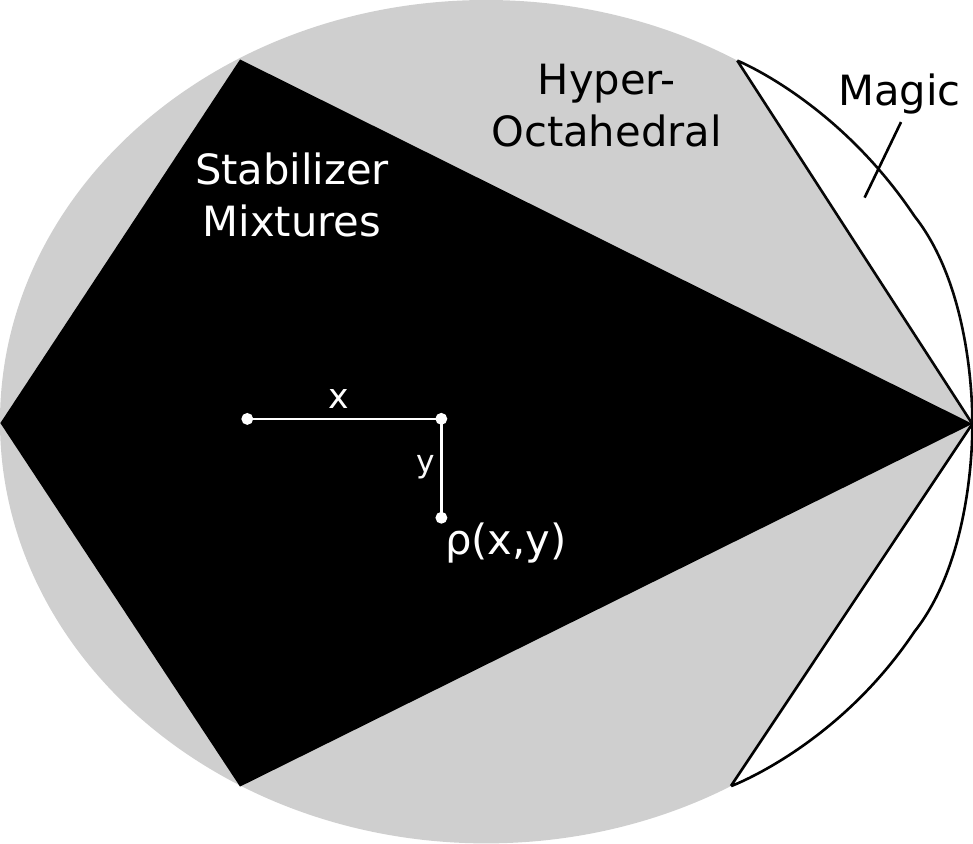}
    \caption{\label{fig:ZIIZ} Visualization of a cross section of the two-qubit Bloch sphere, given by: \vspace{1mm} \\ $\rho(x,y) = \frac{\sigma_{II}}{4} + x(\sigma_{XX} + \sigma_{ZZ} - \sigma_{YY}) + y(\sigma_{ZI} + \sigma_{IZ})$ }
\end{figure}

\begin{figure}
    \centering

    \includegraphics[width=0.35\textwidth]{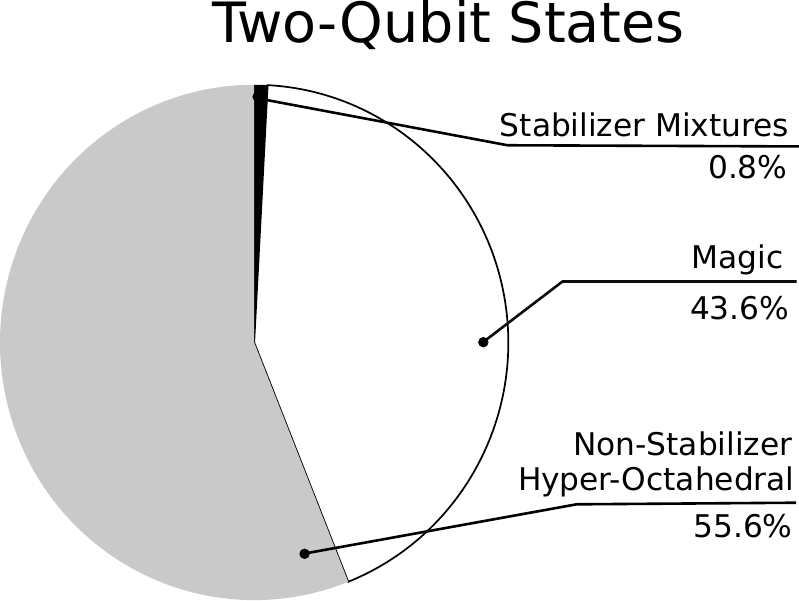}
    \caption{\label{fig:states-pie} Relative quantity of two-qubit mixed states, based on one million samples via the Hilbert-Schmidt measure. Hyper-octahederal states are plentiful for two-qubits, despite not existing for the single qubit.}
    \vspace{-3mm}
\end{figure}

\vspace{-4mm}
\subsection{Channel Classification}

While the classification of states gave rise to only three categories, the classification of channels is not so simple. FIG.~\ref{fig:Venn} shows eight categories, all of which are non-empty. Here are examples of each:
\begin{description}
    \setlength\itemsep{0mm}
    
    \item[M] Non-Clifford unitaries, such as the $T$ gate.
    \item[CSH] Clifford unitaries, measuring a qubit in a Pauli basis (without discarding it), and very depolarized non-Clifford unitaries.
    \item[SH] Mildly depolarized non-Clifford unitaries, e.g. the $T$ gate with fidelity $0.551 \lessapprox f \leq 2^{-1/2}$ (FIG.~\ref{fig:noisy_z_theta}).
    \item[C] Most adaptive Clifford gates: gates performed based on the outcome of a measurement (Proposition~\ref{thm:adapt}).
    \item[H] Any non-Pauli reset channel (Proposition~\ref{thm:reset}).
    \item[CH] Pauli reset channels \cite{bennink}.
    \item[S, CS] Channels adjoints for H, HC, respectively.
\end{description}

To obtain the relative proportions of these categories akin to FIG.~\ref{fig:states-pie} we leverage channel-state duality. Our definition of postselective channels in the appendix is specifically chosen to make the correspondence between two-qubit mixed states and qubit-to-qubit channels a bijection. We sample states according to the Hilbert-Schmidt measure and classify their corresponding channels. Most channels in practice are either unital, trace preserving or both. It is not obvious how to restrict sampling to these measure-zero subspaces. Instead, we sample from the full Hilbert Schmidt measure, and then project onto the Bloch-subspaces corresponding to unital and/or trace preserving channels. 

FIG.~\ref{fig:channel-pies} shows the resulting proportions. For qubit-to-qubit channels, Pauli propagation techniques permit simulation of a significant fraction of the circuit components which are a superset of those simulable by stabilizer propagation.  As before, it is not clear that this demonstrates that Pauli propagation is significantly more useful in practice, since most quantum circuits are dominated by a few specific types channels.

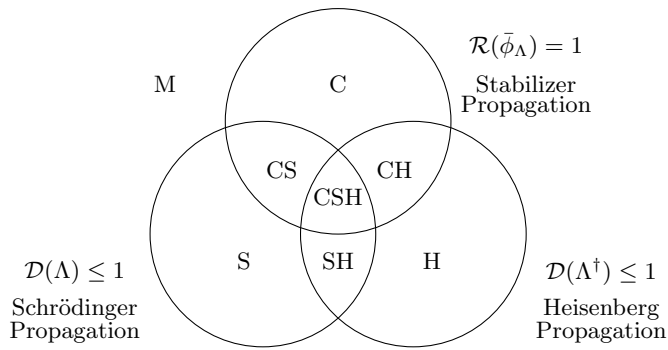
\begin{figure}[t]
    \begin{tikzpicture}
        \draw (3.5, 1.5) circle [radius = 1.5];;
        \draw (2.5, 0) circle [radius = 1.5];;
        \draw (4.5, 0) circle [radius = 1.5];;
        \node at (6, 2.5) {$\mathcal{R}(\bar \phi_\Lambda) = 1$};
        \node at (6, 2) {Stabilizer};
        \node at (6, 1.7) {Propagation};
        \node at (0, -0.5) {$\mathcal{D}(\Lambda) \leq 1$};
        \node at (0, -1) {Schr\"odinger};
        \node at (0, -1.3) {Propagation};
        \node at (7, -0.5) {$\mathcal{D}(\Lambda^\dag) \leq 1$};
        \node at (7, -1) {Heisenberg};
        \node at (7, -1.3) {Propagation};
        \node at (2.25, -0.35) {S};
        \node at (2.75, 0.85) {CS};
        \node at (3.5, 2) {C};
        \node at (3.5, 0.5) {CSH};
        \node at (3.5, -0.35) {SH};
        \node at (4.25, 0.85) {CH};
        \node at (4.75, -0.35) {H};
        \node at (1.2, 2) {M};
    \end{tikzpicture}

\caption{{\label{fig:Venn}}A Venn Diagram of quantum channels that illustrates our naming convention. The channels not efficient under any strategy are category M.}
\end{figure}

In the following we give evidence for the above examples. To do so, we phrase $\mathcal{D}(\Lambda)$ in terms of the Pauli transfer matrix of $\Lambda$.

\begin{definition}
    The Pauli Transfer Matrix (PTM) of a quantum channel $\Lambda$ taking $n$ qubits to $m$ qubits has elements $(R_\Lambda)_{ij} = 2^{-m}\Tr(\sigma_i\Lambda(\sigma_j))$ such that $\Lambda(\rho) = 2^{-n}\sum_{i, j} (R_\Lambda)_{ij} \sigma_i \Tr(\rho \sigma_j)$. We take $\sigma_{1} = I$.
\end{definition}

Intuitively, the columns of $R_\Lambda$ are the Bloch vectors of $\Lambda(\sigma_i)$. The following observations are useful and trivial to prove.

\begin{proposition}
$D(\Lambda) = \lVert R_\Lambda \rVert_1$, where $\lVert \cdot \rVert_1$ is the induced L1-norm, i.e. the largest column L1-norm.
\end{proposition}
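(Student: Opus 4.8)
The plan is to prove the identity $\mathcal{D}(\Lambda) = \lVert R_\Lambda \rVert_1$ by tracking how the two definitions act on a single Pauli input and recognizing the resulting expression as a column of the PTM. First I would fix an input Pauli $\sigma_j \in \mathcal{P}_n$ and expand $\mathcal{D}(\Lambda(\sigma_j))$ using the definition of the stabilizer norm in \eqref{eq:stabnorm}. Writing $\Lambda(\sigma_j)$ as an $m$-qubit Hermitian operator, this gives
\begin{equation}
\mathcal{D}(\Lambda(\sigma_j)) = \frac{1}{2^m} \sum_{\sigma_i \in \mathcal{P}_m} \left| \Tr(\sigma_i \Lambda(\sigma_j)) \right|.
\end{equation}
The key observation is that each summand $2^{-m}\Tr(\sigma_i \Lambda(\sigma_j))$ is exactly the PTM entry $(R_\Lambda)_{ij}$, so that $\mathcal{D}(\Lambda(\sigma_j)) = \sum_i \left| (R_\Lambda)_{ij} \right|$, which is precisely the L1-norm of the $j$-th column of $R_\Lambda$.

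Next I would take the maximum over input Paulis. By the definition of the channel stabilizer norm, $\mathcal{D}(\Lambda) = \max_{\sigma_j \in \mathcal{P}_n} \mathcal{D}(\Lambda(\sigma_j))$, which by the previous step equals $\max_j \sum_i \left| (R_\Lambda)_{ij} \right|$, i.e. the largest column L1-norm. This is exactly the induced L1 operator norm $\lVert R_\Lambda \rVert_1$, completing the identification.

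The one genuine subtlety I would make sure to address is why maximizing over the Pauli \emph{basis elements} $\sigma_j$ suffices, rather than over all Hermitian $\sigma$ of unit norm as the induced-norm definition might naively suggest. This is already baked into the definition of $\mathcal{D}(\Lambda)$, which only ranges over $\sigma \in \mathcal{P}_n$, so the matching is immediate once one notes that the columns of $R_\Lambda$ are indexed precisely by these Pauli inputs. The only other small point is the normalization bookkeeping between the $2^{-n}$ appearing in the state stabilizer norm and the $2^{-m}$ in the PTM definition for a channel mapping $n$ qubits to $m$; tracking the dimension of the output space carefully ensures the factors cancel correctly and the summand is literally $(R_\Lambda)_{ij}$. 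I do not anticipate any real obstacle here — as the paper notes, the result is essentially a direct unfolding of definitions, and the main work is simply recognizing the column-L1 structure.
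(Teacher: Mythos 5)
Your proof is correct and is exactly the direct unfolding of definitions that the paper has in mind: it states this proposition without proof, calling it ``trivial to prove,'' and your identification of $2^{-m}\Tr(\sigma_i\Lambda(\sigma_j))$ with $(R_\Lambda)_{ij}$ followed by the maximum over input Paulis is the intended argument. Your attention to the $2^{-m}$ versus $2^{-n}$ normalization and to why maximizing over $\mathcal{P}_n$ suffices is appropriate and resolves the only points of potential confusion.
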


\begin{proposition}
$R^T_\Lambda = R_{\Lambda^\dagger}$
\end{proposition}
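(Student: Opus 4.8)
The plan is to verify the matrix identity entrywise and collapse it to the defining property of the channel adjoint. Fix indices $i,j$ and begin from the left-hand side: by definition of the transpose, $(R^T_\Lambda)_{ij} = (R_\Lambda)_{ji} = 2^{-m}\Tr(\sigma_j\,\Lambda(\sigma_i))$, where $\sigma_i$ is an input ($n$-qubit) Pauli and $\sigma_j$ is an output ($m$-qubit) Pauli. The whole content of the proposition is to move $\Lambda$ off of $\sigma_i$ and onto $\sigma_j$.

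Next I would invoke the adjoint relation $\Tr(E\,\Lambda(\rho)) = \Tr(\Lambda^\dagger(E)\,\rho)$ with $E = \sigma_j$ and $\rho = \sigma_i$; this is legitimate because the identity holds for all Hermitian arguments and Pauli matrices are Hermitian. It yields $\Tr(\sigma_j\,\Lambda(\sigma_i)) = \Tr(\Lambda^\dagger(\sigma_j)\,\sigma_i) = \Tr(\sigma_i\,\Lambda^\dagger(\sigma_j))$, the last step by cyclicity of the trace. Comparing with the PTM definition applied to $\Lambda^\dagger$ (which maps $m$ qubits to $n$ qubits), one reads off $(R_{\Lambda^\dagger})_{ij} = 2^{-n}\Tr(\sigma_i\,\Lambda^\dagger(\sigma_j))$, so that $(R^T_\Lambda)_{ij}$ and $(R_{\Lambda^\dagger})_{ij}$ coincide up to the prefactor $2^{-m}$ versus $2^{-n}$.

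The only point requiring care — the one thing resembling an obstacle in an otherwise immediate computation — is this normalization bookkeeping between input and output dimensions. For the qubit-to-qubit channels that are the focus of this work we have $m = n$, so the two prefactors agree and the identity $R^T_\Lambda = R_{\Lambda^\dagger}$ holds exactly; since $i,j$ were arbitrary this establishes the proposition. (For genuinely dimension-changing channels one would either retain the explicit $2^{m-n}$ factor, or adopt the convention of normalized Pauli operators, under which the transpose relation is exact in every dimension.) No Hermitian-conjugation subtleties arise, since every operator appearing is a Pauli and hence Hermitian, so the adjoint enters solely through the trace-pairing identity together with cyclicity of the trace.
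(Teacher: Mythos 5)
Your proof is correct, and it is essentially the only proof available: the paper offers none, dismissing the proposition as ``trivial to prove,'' and the intended argument is exactly your entrywise computation via the trace-pairing identity $\Tr(E\,\Lambda(\rho)) = \Tr(\Lambda^\dagger(E)\,\rho)$. Your one substantive addition --- the observation that the prefactors $2^{-m}$ and $2^{-n}$ only agree when the input and output dimensions match, so the identity as literally stated requires $m=n$ (or normalized Paulis) --- is a legitimate catch that the paper glosses over; note that the same caveat silently propagates into the paper's Corollary relating $\mathcal{D}(\Lambda^\dagger)$ to the induced $L^\infty$-norm of $R_\Lambda$, which likewise picks up a factor of $2^{m-n}$ for genuinely dimension-changing channels. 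Since the channel-classification section works with qubit-to-qubit (square) PTMs, nothing downstream breaks, and your proof stands as written.
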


\begin{corollary}{\label{thm:D_dag} }
$D(\Lambda^\dagger) = \lVert R_\Lambda \rVert_\infty$, where $\lVert \cdot \rVert_\infty$ is the induced L$\infty$-norm, i.e. the largest row L1-norm.
\end{corollary}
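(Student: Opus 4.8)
The statement to prove is Corollary~\ref{thm:D_dag}: $\mathcal{D}(\Lambda^\dagger) = \lVert R_\Lambda \rVert_\infty$.

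\medskip

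The plan is to chain together the two immediately preceding propositions, since the corollary is stated as a direct consequence of them. First I would recall that the preceding Proposition asserts $\mathcal{D}(\Lambda) = \lVert R_\Lambda \rVert_1$, the induced $L^1$-norm, which equals the largest column $L^1$-norm of the Pauli transfer matrix. Applying this same identity to the adjoint channel $\Lambda^\dagger$ in place of $\Lambda$ gives immediately
\begin{equation}
\mathcal{D}(\Lambda^\dagger) = \lVert R_{\Lambda^\dagger} \rVert_1.
\end{equation}

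\medskip

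Next I would invoke the other preceding Proposition, $R_\Lambda^T = R_{\Lambda^\dagger}$, to substitute $R_{\Lambda^\dagger}$ with $R_\Lambda^T$:
\begin{equation}
\mathcal{D}(\Lambda^\dagger) = \lVert R_\Lambda^T \rVert_1.
\end{equation}
The final step is the standard linear-algebra fact that the induced $L^1$-norm of a transpose equals the induced $L^\infty$-norm of the original matrix: the largest column $L^1$-norm of $R_\Lambda^T$ is exactly the largest row $L^1$-norm of $R_\Lambda$, which is by definition $\lVert R_\Lambda \rVert_\infty$. This yields $\mathcal{D}(\Lambda^\dagger) = \lVert R_\Lambda \rVert_\infty$, completing the argument.

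\medskip

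There is essentially no main obstacle here; the corollary is a two-line consequence of facts already established, and the only substantive ingredient is the duality between the induced $L^1$ and $L^\infty$ operator norms under transposition, which is elementary. The one point worth stating carefully is why $R_\Lambda^T = R_{\Lambda^\dagger}$ holds, but since that is the content of the preceding proposition I may assume it; if a self-contained remark were desired, it follows from the symmetry of the PTM entries $(R_\Lambda)_{ij} = 2^{-m}\Tr(\sigma_i \Lambda(\sigma_j))$ together with the defining adjoint relation $\Tr(\sigma_i \Lambda(\sigma_j)) = \Tr(\Lambda^\dagger(\sigma_i)\sigma_j)$, which swaps the roles of the row and column indices.
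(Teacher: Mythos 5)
Your proof is correct and follows exactly the route the paper intends: the corollary is presented as an immediate consequence of the two preceding propositions, and you chain them together with the standard duality between the induced $L^1$ and $L^\infty$ norms under transposition. The paper offers no explicit proof (it calls these observations "trivial to prove"), and your two-line argument is the natural one.
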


The PTM of a Clifford gate is a signed permutation matrix and the PTMs of Pauli basis measurements are signed permutations of $\text{diag}(1,1,0,0)$. Their Choi states are also readily shown the be stabilizer mixtures, so these channels are CSH as claimed.\\ 

\vfill
\onecolumngrid

\hspace{1mm}

\begin{table}[b]
\setlength{\tabcolsep}{6pt}
    \textbf{Relative Quantity of Qubit-To-Qubit Channels}
    \vspace{5mm}\\
    \includegraphics[width=\textwidth]{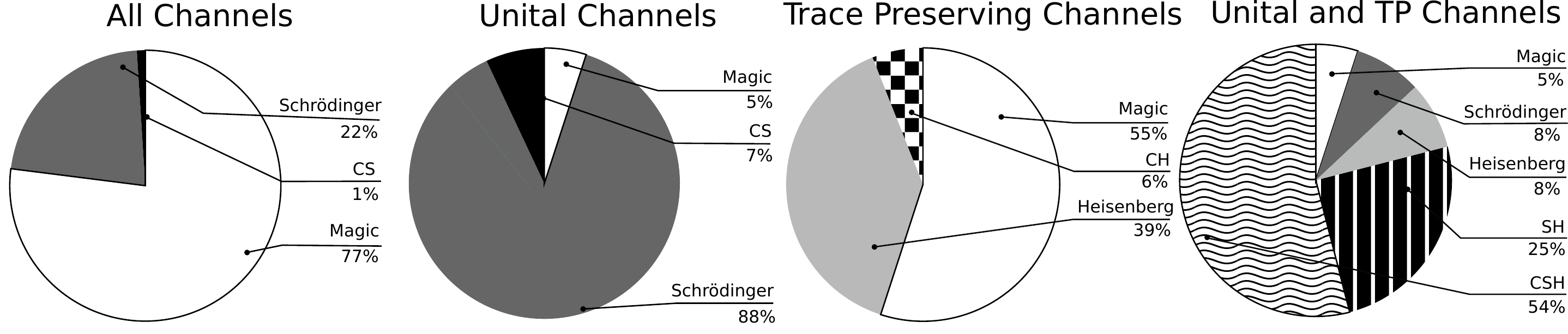}
    \refstepcounter{figure}
    \caption*{\label{fig:channel-pies}FIG. \arabic{figure}: Relative quantity of of qubit-to-qubit quantum channels, based on 100 000 random two-qubit density matrices obtained via the Hilbert-Schmidt measure. After obtaining the PTM we optionally set the first column or row to [1,0,0,0] to enforce unitality or trace preservation respectively \cite{gst}. We utilize the cvxpy library \cite{cvxpy} to compute $\mathcal{R}$ and use a tolerance of $10^{-6}$ throughout. }
\end{table}
\twocolumngrid

\hspace{1cm}

\clearpage

\subsection{Depolarized Rotations}

Many useful unitaries take the form $e^{-i\theta \sigma/2}$ with $\sigma \in \mathcal{P}_n$. Via some Clifford transformations these can be obtained from the qubit unitary $e^{-i\theta \sigma_Z/2}$. In this section we consider composing this unitary with depolarizing noise, obtaining a family of channels $\Lambda_{\theta,f}$ where $f$ is the fidelity.

The PTMs of the unitary $e^{-\theta\sigma_Z/2}$ and depolarizing noise are respectively:
$$R_{\theta} =
\begin{bmatrix}
1 & 0 & 0 & 0\\
0 & \cos \theta & -\sin \theta & 0\\
0 & \sin \theta & \cos \theta & 0\\
0 & 0 & 0 & 1
\end{bmatrix}\\
\hspace{5mm} R_f = 
\begin{bmatrix}
1 & 0 & 0 & 0\\
0 & f & 0 & 0\\
0 & 0 & f & 0\\
0 & 0 & 0 & f
\end{bmatrix}
$$

Composing these two channels simply involves multiplying the two PTMs, resulting in:
\begin{eqnarray}R_{\Lambda_{f, \theta}} &=&
\begin{bmatrix}
    1 & 0 & 0 & 0\\
    0 & f\cos \theta & -f\sin \theta & 0\\
    0 & f\sin \theta & f \cos \theta & 0\\
    0 & 0 & 0 & f
\end{bmatrix}\\
    \mathcal{D}(\Lambda_{f, \theta}) = \mathcal{D}(\Lambda_{r, \theta}^\dag) &=& \max\big(1, f\lvert \cos \theta \rvert + f\lvert \sin \theta \rvert\big)
\end{eqnarray}

\begin{figure}[b]
    \centering
    \includegraphics[width=0.4\textwidth]{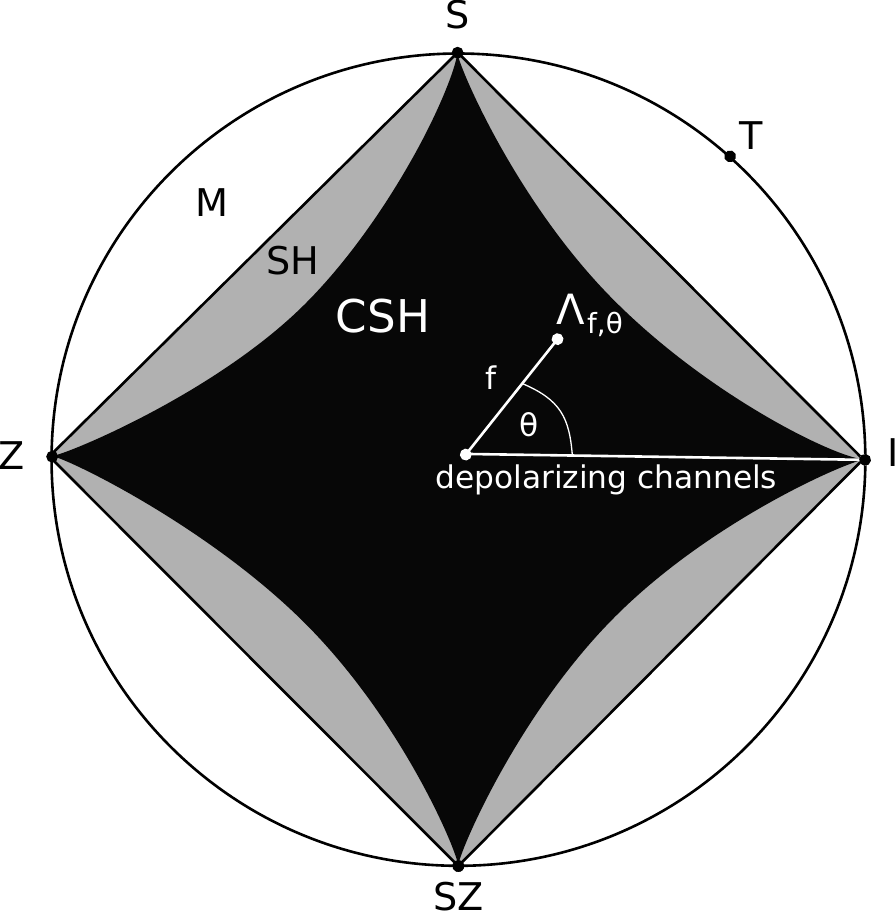}
    \caption{\label{fig:noisy_z_theta} Qubit quantum channels $\Lambda_{f,\theta}$ obtained by an application of the unitary  $e^{-i\theta \sigma_Z/2}$ followed by depolarizing noise with fidelity $f$. The region simulable by Pauli propagation (SH) is larger than that simulable by stabilizer propagation (CSH).}
    \vspace{-7mm}
\end{figure}

We plot the family in FIG.~\ref{fig:noisy_z_theta}, showing that there are channels simulable by Pauli propagation methods that are not simulable by stabilizer propagation. The boundary of $\mathcal{D} \leq 1$ given by $\lvert \cos \theta \rvert + \lvert \sin \theta \rvert = 1$ forms a diamond. The depolarized $T$ gate becomes SH when $f \leq 2^{-1/2} \approx 0.707$, and becomes CSH when $f \lessapprox 0.551$.
 
\subsection{Reset Channels}

Pauli reset channels can be described as projecting into the $+1$ eigenspace of some $\sigma \in \mathcal{P}_n $ as in \cite{bennink}. Alternatively we can use Clifford transformations to convert $\sigma$ to $\sigma_Z$, converting the channel to tracing out a single qubit and replacing it with $\ket{0}$. We generalize the notion of a reset channel $\Lambda_\rho$ to tracing out $n$ qubits and replacing them with an $n$-qubit state $\rho$. To make the channel trace preserving we write $\Lambda_\rho(\sigma) = \text{Tr}(\sigma) \cdot \rho$.

\begin{proposition}{\label{thm:reset}}
If $\Lambda_\rho$ is a reset channel, $\mathcal{D}(\Lambda^\dagger) = 1$.
\end{proposition}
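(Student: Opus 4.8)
The plan is to work directly with the Pauli transfer matrix characterization of $\mathcal{D}(\Lambda^\dagger)$ from Corollary~\ref{thm:D_dag}, which gives $\mathcal{D}(\Lambda^\dagger) = \lVert R_\Lambda \rVert_\infty$, the largest row L1-norm of $R_\Lambda$. So the first step is to write down $R_{\Lambda_\rho}$ explicitly. Using $\Lambda_\rho(\sigma_j) = \Tr(\sigma_j)\rho$ together with the convention $\sigma_1 = I$, I note that $\Tr(\sigma_j)$ equals $2^n$ when $\sigma_j = I$ and vanishes otherwise. Substituting into $(R_\Lambda)_{ij} = 2^{-n}\Tr(\sigma_i \Lambda(\sigma_j))$ collapses every column except the first to zero, while the first column becomes $(R_\Lambda)_{i1} = \Tr(\sigma_i \rho)$, i.e. the Bloch vector of $\rho$.

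The second step is to read off the largest row L1-norm of this single-column matrix. Since each row $i$ has its only possible nonzero entry in the first column, the L1-norm of row $i$ is simply $\lvert \Tr(\sigma_i \rho)\rvert$, so $\mathcal{D}(\Lambda^\dagger) = \max_{i} \lvert \Tr(\sigma_i \rho)\rvert$. I would then bound this maximum from both sides: the identity row $i=1$ gives $\Tr(I\rho)=\Tr(\rho)=1$ exactly, establishing $\mathcal{D}(\Lambda^\dagger)\geq 1$; and since every $\sigma_i$ has eigenvalues $\pm1$, we have $\lvert\Tr(\sigma_i\rho)\rvert \leq 1$ for the density matrix $\rho$, giving $\mathcal{D}(\Lambda^\dagger)\leq 1$. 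Together these pin the value to exactly $1$.

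An equivalent route, worth mentioning as a consistency check, is to compute the adjoint directly: from $\Tr(E\,\Lambda_\rho(\sigma)) = \Tr(\sigma)\Tr(E\rho)$ one reads off $\Lambda_\rho^\dagger(E) = \Tr(E\rho)\,I$, so that $\mathcal{D}(\Lambda^\dagger) = \max_{\sigma\in\mathcal{P}_n}\mathcal{D}(\Tr(\sigma\rho)\,I) = \max_{\sigma}\lvert\Tr(\sigma\rho)\rvert\cdot\mathcal{D}(I)$, and the elementary fact $\mathcal{D}(I)=1$ recovers the same expression for the maximum.

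There is no serious obstacle here; the computation is short once the PTM of the reset channel is in hand. The only point requiring a moment of care is confirming that the maximum is attained exactly at $1$ rather than strictly below it — that the identity Pauli contributes a row of norm exactly $\Tr(\rho)=1$, while no other Pauli can push a row norm above $1$. This is the same spectral fact about Paulis that forced $\max_\sigma\lvert\Tr(\sigma\rho_0)\rvert = 1$ in the Heisenberg propagation bound, so it dovetails with the surrounding analysis.
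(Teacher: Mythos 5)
Your proof is correct and follows essentially the same route as the paper: write down the PTM of $\Lambda_\rho$, observe that only the identity-input column survives and equals the Bloch vector of $\rho$, bound each entry by $\pm 1$ with the identity entry exactly $\Tr(\rho)=1$, and invoke Corollary~\ref{thm:D_dag}. If anything, your bookkeeping of rows versus columns (and of the $\Tr(I)=2^n$ factor that makes the top-left entry exactly $1$) is more careful than the paper's own wording, and your direct computation $\Lambda_\rho^\dagger(E)=\Tr(E\rho)\,I$ is a nice cross-check.
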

\begin{proof}
The entries of the PTM of $\Lambda_\rho$ are the following:
$$
    (R_{\Lambda_\rho})_{ij} = 2^{-n}\Tr(\sigma_i\Lambda_\rho(\sigma_j)) =
\begin{cases}
    2^{-n}\Tr(\sigma_i \rho) & \sigma_j = I\\
0 & \sigma_j \neq I
\end{cases}
$$
All rows except for the first are zero. The entries are bounded $-1 \leq 2^{-n}\Tr(\sigma_i \rho) \leq 1$ and the top left entry is 1. Thus the maximum column L1 norm is 1, and Proposition \ref{thm:D_dag} tells us that $\mathcal{D}(\Lambda^\dag) = 1$.
\end{proof}

Observe that the first row is actually the Bloch vector of $\rho$ (including the identity component) scaled by $2^n$. So unless $\rho$ is the maximally mixed state the first row's L1 norm is $> 1$, so the channel is not simulable by Schr\"odinger propogation, and its adjoint is not simulable by Heisenberg propogation.

The Choi state of $\Lambda_\rho$ is $\frac{I}{2^n} \otimes \rho$, so $\Lambda_\rho$ is simulable by stabilizer propagation when $\rho$ is a stabilizer mixture.
\subsection{Adaptive Channels}

Adaptive channels consist of making a $\sigma_Z$ measurement, and then conditionally applying a channel based on the measurement outcome. While Pauli propagation techniques are stronger than stabilizer propagation in many respects, adaptive channels are their key weak point. This remains true even if the measured qubit is \emph{not} discarded, so we are not conflating the cost of tracing out qubits with the cost of adaptivity.

\begin{proposition}{\label{thm:adapt}}
    Let $\Lambda$ be a quantum channel with PTM $R_\Lambda$. Let $A(\Lambda)$ be the adaptive channel that conditionally applies $\Lambda$ based on a $\sigma_Z$ measurement on some qubit that is not discarded post-measurement. Then:
\begin{eqnarray}
    \mathcal{D}(A(\Lambda)) &=& 1 + \max_{i} \sum_{i\neq j} |R_{ij}| \leq 1 + \mathcal{D}(\Lambda^\dag)\\
    \mathcal{D}(A(\Lambda)^\dagger) &=& 1 + \max_{j} \sum_{i\neq j} |R_{ij}| \leq 1 + \mathcal{D}(\Lambda)
\end{eqnarray}

\end{proposition}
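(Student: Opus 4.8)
The plan is to write the adaptive channel explicitly, compute its Pauli transfer matrix in terms of $R_\Lambda$, and then extract $\mathcal{D}(A(\Lambda))$ and $\mathcal{D}(A(\Lambda)^\dagger)$ as the induced L1 and L$\infty$ norms using Proposition ($\mathcal{D}(\Lambda)=\lVert R_\Lambda\rVert_1$) and Corollary~\ref{thm:D_dag} ($\mathcal{D}(\Lambda^\dagger)=\lVert R_\Lambda\rVert_\infty$). First I would fix notation: let $c$ be the measured control qubit and let $\Lambda$ act on the remaining register, with $P_0=(I+\sigma_Z)/2$ and $P_1=(I-\sigma_Z)/2$ the $\sigma_Z$ eigenprojectors. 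Since the control qubit is measured but not discarded, the channel is the sum of its two measurement branches (identity on one outcome, $\Lambda$ on the other), namely $A(\Lambda)(\rho)=(P_0\otimes\mathbb{1})\rho(P_0\otimes\mathbb{1})+(\mathbb{1}_c\otimes\Lambda)[(P_1\otimes\mathbb{1})\rho(P_1\otimes\mathbb{1})]$.

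Next I would compute the action on the Pauli basis elements $\tau\otimes\sigma_j$ with $\tau\in\{I,\sigma_X,\sigma_Y,\sigma_Z\}$. The key elementary identities are $P_b\,\sigma_X\,P_b=P_b\,\sigma_Y\,P_b=0$ (the measurement dephases the control) together with $P_b\,I\,P_b=P_b$ and $P_b\,\sigma_Z\,P_b=(-1)^bP_b$. Substituting $\Lambda(\sigma_j)=\sum_i(R_\Lambda)_{ij}\sigma_i$ and expanding, the outputs of $I\otimes\sigma_j$ and $\sigma_Z\otimes\sigma_j$ land only in the $I\otimes\sigma_i$ and $\sigma_Z\otimes\sigma_i$ sectors, while $\sigma_X\otimes\sigma_j$ and $\sigma_Y\otimes\sigma_j$ map to zero. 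This produces a PTM whose only nonzero entries are of the form $\tfrac12\delta_{ij}\pm\tfrac12(R_\Lambda)_{ij}$, arranged in four blocks indexed by the $\{I,\sigma_Z\}$ sectors of the control.

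With the PTM in hand, the two quantities follow by reading off induced norms. Computing a column L1 norm, the off-diagonal terms each contribute $\tfrac12|(R_\Lambda)_{ij}|+\tfrac12|(R_\Lambda)_{ij}|=|(R_\Lambda)_{ij}|$ while the diagonal $i=j$ term contributes $\tfrac12|1+(R_\Lambda)_{jj}|+\tfrac12|1-(R_\Lambda)_{jj}|$; maximizing over columns gives $1$ plus a maximal off-diagonal sum of $R_\Lambda$, and the analogous row computation (equivalently, the transpose relation $R^T_\Lambda=R_{\Lambda^\dagger}$) gives the adjoint expression. Bounding each off-diagonal sum by the full column, respectively row, sum then upgrades these to the stated inequalities via $\mathcal{D}(\Lambda)=\lVert R_\Lambda\rVert_1$ and $\mathcal{D}(\Lambda^\dagger)=\lVert R_\Lambda\rVert_\infty$.

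The main obstacle, and the only place the physicality of $\Lambda$ enters, is turning the diagonal contribution into an exact constant rather than a mere bound. I claim $\tfrac12|1+(R_\Lambda)_{jj}|+\tfrac12|1-(R_\Lambda)_{jj}|=1$, which holds precisely because $|(R_\Lambda)_{jj}|\le1$ for any quantum channel: writing $\sigma_j=P_+-P_-$ as a difference of eigenprojectors and using $\lVert\sigma_i\rVert_\infty=1$ together with trace preservation bounds $|\Tr(\sigma_i\Lambda(\sigma_j))|\le\Tr\Lambda(I)=2^m$, hence $|(R_\Lambda)_{ij}|\le1$. This is exactly what makes the result an equality rather than an inequality; the remaining work is the routine bookkeeping of the four blocks and verifying that the $\sigma_X$ and $\sigma_Y$ control sectors vanish.
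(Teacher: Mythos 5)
Your proof is correct and follows essentially the same route as the paper's: compute the PTM of $A(\Lambda)$ block by block (the control's $\sigma_X,\sigma_Y$ sectors vanish, the $I,\sigma_Z$ sectors carry entries $\tfrac{1}{2}(\delta_{ij}\pm R_{ij})$) and read off the induced L1/L$\infty$ norms, with the diagonal terms collapsing to $1$ because $|R_{jj}|\le 1$. Your explicit justification of $|R_{jj}|\le 1$ is a detail the paper omits; the only thing to flag is that your (correct) computation ties $\mathcal{D}(A(\Lambda))$ to \emph{column} sums of $R_\Lambda$, hence to $\mathcal{D}(\Lambda)=\lVert R_\Lambda\rVert_1$, whereas the proposition as printed pairs it with $\mathcal{D}(\Lambda^\dagger)$ --- an index transposition that is already present in the paper's own final display ($\max_i\sum_j$ where the PTM convention calls for $\max_j\sum_i$), and which does not affect the corollary.
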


\begin{corollary} $A(\Lambda)$ is supported by Pauli propagation methods if and only if the PTM of $\Lambda$ is diagonal.
\end{corollary}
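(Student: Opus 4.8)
The plan is to read the result off directly from Proposition~\ref{thm:adapt}, using only the nonnegativity of the off-diagonal sums appearing there. Recall that ``supported by Pauli propagation methods'' means the component has cost $\leq 1$ under at least one of the two techniques: Schr\"odinger propagation, whose cost is $\mathcal{D}(A(\Lambda))$, or Heisenberg propagation, whose cost is $\mathcal{D}(A(\Lambda)^\dagger)$. So the goal is to show that $\min\big(\mathcal{D}(A(\Lambda)),\,\mathcal{D}(A(\Lambda)^\dagger)\big) \leq 1$ holds if and only if $R_\Lambda$ is diagonal.

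First I would observe that in both expressions from Proposition~\ref{thm:adapt} the quantity added to $1$ is a maximum of sums of absolute values of off-diagonal PTM entries, hence nonnegative. Consequently $\mathcal{D}(A(\Lambda)) \geq 1$ and $\mathcal{D}(A(\Lambda)^\dagger) \geq 1$, so an adaptive channel can never strictly decrease the sample count, and is supported exactly when one of these two costs equals $1$.

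For the ``if'' direction, if $R_\Lambda$ is diagonal then $R_{ij} = 0$ for all $i \neq j$, so every off-diagonal row sum and every off-diagonal column sum vanishes; substituting into Proposition~\ref{thm:adapt} gives $\mathcal{D}(A(\Lambda)) = \mathcal{D}(A(\Lambda)^\dagger) = 1$, so $A(\Lambda)$ is supported (indeed by both techniques). For the ``only if'' direction, suppose $A(\Lambda)$ is supported, so at least one cost equals $1$; say $\mathcal{D}(A(\Lambda)) = 1$, the column case being identical. Then $\max_{i}\sum_{i\neq j}|R_{ij}| = 0$, and since this is a maximum over rows of a sum of nonnegative terms, every individual term $|R_{ij}|$ with $i \neq j$ must vanish, forcing $R_\Lambda$ to be diagonal.

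I expect no serious obstacle: the entire content lives in Proposition~\ref{thm:adapt}, and the corollary is a bookkeeping consequence of nonnegativity. The only point requiring care is the interpretation of ``supported by Pauli propagation methods'' as the disjunction of the Schr\"odinger and Heisenberg criteria, together with the observation that the two criteria in fact coincide for adaptive channels---both collapse to diagonality of $R_\Lambda$---so the distinction between them is immaterial to the final statement.
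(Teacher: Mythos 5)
Your proposal is correct and matches the paper's intent: the corollary is stated as an immediate consequence of Proposition~\ref{thm:adapt}, and your bookkeeping (both costs are $1$ plus a nonnegative max of off-diagonal sums, which vanishes iff $R_\Lambda$ is diagonal, so the Schr\"odinger and Heisenberg criteria coincide here) is exactly the intended argument. No gaps.
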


So Pauli propagation methods are not `closed under adaptivity': $A(\Lambda)$ can be non-simulable even if $\Lambda$ is simulable. Stabilizer propagation on the other hand \emph{is} closed under adaptivity.

\begin{proof}[Proof of Proposition~\ref{thm:adapt}]
    Let $\Lambda$ take $n$ qubits to $m$ qubits. The measurement of the first qubit projects into the space spanned by $I,\sigma_Z$ on the first qubit.

\begin{eqnarray*}
    A(\Lambda)(I \otimes \sigma_j) &=
    \begin{pmatrix}
        \sigma_j & 0\\
        0 & \Lambda(\sigma_j)
    \end{pmatrix} &=
    \begin{pmatrix}
        \sigma_j & 0\\
        0 & \sum_k R_{kj} \sigma_k
    \end{pmatrix}\\
    A(\Lambda)(\sigma_Z \otimes \sigma_j) &=
    \begin{pmatrix}
        \sigma_j & 0\\
        0 & -\Lambda(\sigma_j)
    \end{pmatrix} &=
    \begin{pmatrix}
        \sigma_j & 0\\
        0 & -\sum_k R_{kj} \sigma_k
    \end{pmatrix}
\end{eqnarray*}
The output remains in the space spanned by $I,\sigma_Z$ on the first qubit, so the only nonzero entries of the PTM are:
\begin{eqnarray*}
    \frac{1}{2^{m+1}}\text{Tr}\left( (I \otimes \sigma_i) \cdot A(\Lambda)(I \otimes \sigma_j)  \right) = \frac{1}{2}(\delta_{ij} + R_{ij}) \\
    \frac{1}{2^{m+1}}\text{Tr}\left( (\sigma_Z \otimes \sigma_i) \cdot A(\Lambda)(I \otimes \sigma_j)  \right) = \frac{1}{2}(\delta_{ij} - R_{ij}) \\
    \frac{1}{2^{m+1}}\text{Tr}\left( (I \otimes \sigma_i) \cdot A(\Lambda)(\sigma_Z \otimes \sigma_j)  \right) = \frac{1}{2}(\delta_{ij} - R_{ij}) \\
    \frac{1}{2^{m+1}}\text{Tr}\left( (\sigma_Z \otimes \sigma_i) \cdot A(\Lambda)(\sigma_Z \otimes \sigma_j)  \right) = \frac{1}{2}(\delta_{ij} + R_{ij})  
\end{eqnarray*}
Applying the definition of channel stabilizer norm:
\begin{eqnarray*}
    \mathcal{D}(A(\Lambda)) = \frac{1}{2} \max_i \sum_j \left(|\delta_{ij} + R_{ij}| + |\delta_{ij} - R_{ij}|\right)\\ = 1 + \max_i \sum_{i\neq j} |R_{ij}| \hspace{4mm}\square
\end{eqnarray*}
\renewcommand{\qedsymbol}{}
\end{proof}

\section{Numerical Results}

Algorithms based on Monte Carlo averages have favorable memory requirements and admit massive parallelization. We demonstrate these practical advantages via the performance of a GPU implementation written in CUDA \cite{cuda}. 

Following previous tests of near-Clifford algorithms \cite{extent} we simulate the Quantum Approximate Optimization Algorithm (QAOA) on E3LIN2 \cite{QAOAE3LIN2}. We generate $m$ random independent linear equations acting on three qubits $a,b,c \in [n]$ of the form $x_a \oplus x_b \oplus x_c = d_j$ for $j \in [m]$. Each qubit appears in at most $m/10$ equations. Let $\sigma^{(j)}_Z = \sigma_{Z,a} \otimes \sigma_{Z,b} \otimes\sigma_{Z,c}$ be $\sigma_Z$ acting on the qubits corresponding to equation $j$. Our goal is to estimate the observable
$$C = \frac{1}{2} \sum_{j \in [m]} (-1)^{d_j} \sigma^{(j)}_{Z} $$
since $C+m/2$ is the number of satisfied equations. We estimate the expectation of this observable with the state
$$\ket{\gamma,\beta} = e^{-i \beta B} e^{-i\gamma C}\ket{+^{\otimes n}}$$
where $B = \sum_{i \in [n]} \sigma_{X,i}$ and $\beta = \pi/4$. 

Heisenberg propagation is most appropriate for this problem, with performance $\mathcal{D}(C) = m/2$ and $\mathcal{D}(e^{\pm i\gamma \sigma^{(j)}_Z}) =|\sin \gamma| + |\cos \gamma| $. Although the unitary $e^{\pm i\gamma \sigma^{(j)}_Z}$ appears $m$ times in the circuit, at most $3(m/10 -1) + 1$ can act non-trivially on any term in $C$. Thus the accuracy of the simulation is given by:
$$\varepsilon_\text{Heis} = \frac{m}{\sqrt{2N}} \cdot \sqrt{\ln \frac{2}{\delta}} \cdot (|\sin \gamma| + |\cos \gamma|)^{3(m/10 -1) + 1} $$

As pointed out by \cite{extent}, a protocol by van den Nest \cite{nest} gives an \emph{efficient} Monte Carlo protocol for estimating $\langle C \rangle$ with error $\varepsilon_\text{Nest} =\frac{m}{\sqrt{N}} \cdot \sqrt{\ln \frac{2}{\delta}} $. We utilize the van den Nest estimate $\langle C\rangle_\text{Nest}$ to verify the Heisenberg propagation estimate $\langle C\rangle_\text{Heis}$. 

Writing effective CUDA applications demands careful memory management. Implementing stabilizer propagation via the Aaronson-Gottesman tableau algorithm would be a serious computer engineering task. In contrast, the increased simplicity of Pauli propagation algorithms permits a very simple implementation. We furthermore utilize bitwise operations to express the logic in a compact and efficient manner. Despite the better scaling it was ultimately necessary to also implement the van den Nest protocol in CUDA due to the sheer performance improvement over a Python implementation. 

\begin{figure}[b]
    \centering
    \includegraphics[width=0.5\textwidth]{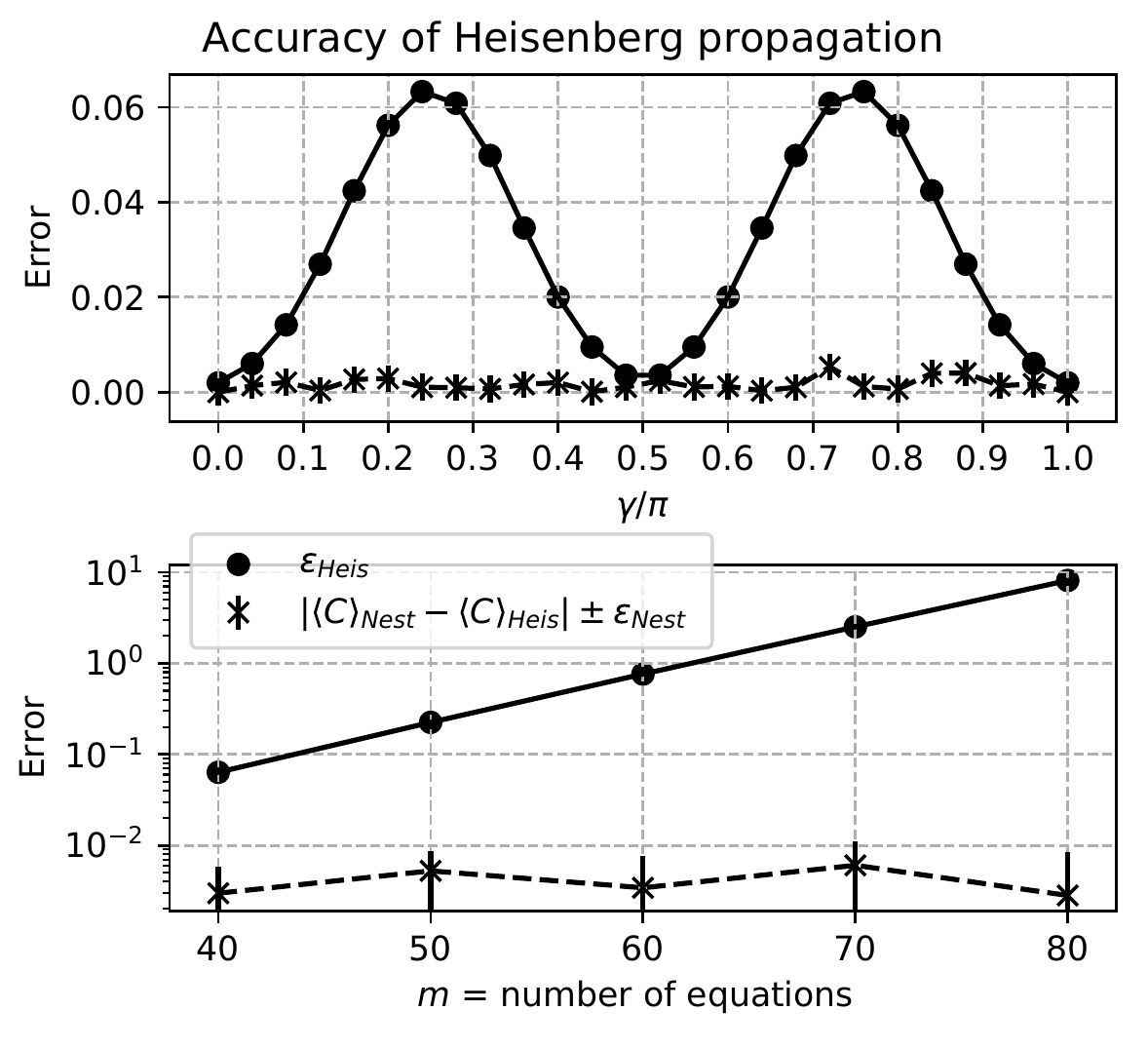}
    \caption{\label{fig:QAOA} Comparison of Hoeffding error bound $\varepsilon_\text{Heis}$ to error as estimated by the van den Nest protocol $|\langle C\rangle_\text{Nest} - \langle C\rangle_\text{Heis}|$ for 32 qubits. Top: $m = 40$ and varying $\gamma$. Bottom: $\gamma = \pi/8$ and varying $m$.}
    \vspace{-7mm}
\end{figure}

For every data point we collected $2^{30} \approx 1$ billion samples in 25 minutes using a laptop GPU (GeForce GTX 1050 Ti). We fix $n = 32$ qubits and $\delta = 0.01$ throughout, and vary $\gamma$ for a single instance with $m = 40$ equations (Figure~\ref{fig:QAOA}, top). Then we set $\gamma = \pi/8$, maximizing $\mathcal{D}(e^{\pm i\gamma \sigma^{(j)}_Z}) $ at $\sqrt{2}$, and perform a scaling analysis with instances up to $m = 80$ (Figure~\ref{fig:QAOA}, bottom). \\

Hoeffding's inequality gives a worst-case upper bound for the accuracy of the estimate, potentially very far from the actual error. This is the case here: for $m \gtrapprox 60$ we have $\varepsilon_\text{Heis} \geq 1$ predicting that  $\langle C\rangle_\text{Heis}$ is useless, but we observe that the actual error is $\leq 0.01$. Furthermore the actual accuracy does not seem to scale proportionally with $\varepsilon_\text{Heis}$ as we vary $\gamma$ and $m$.

\section*{Conclusion}
Recent interest in near-Clifford simulation \cite{bennink, pash, gosset, extent} and the (non-)contextuality of Clifford circuits \cite{hwve14, rbdobv15, rbdobv16, dovbr16} demonstrates that there is still much to be learned about embedding symmetry into Hilbert space. The qubit Clifford group appears different from the Clifford group in odd dimensions, where the discrete Wigner function \cite{gross} has led to well-behaved resource theories \cite{resource, vcge12, vmge13} and associated simulation algorithms \cite{sampling}. We observe that the qubit analogue of the Wigner function is just a Bloch vector, and our analysis of the resulting algorithms sheds further light into the differences between the even and odd-dimensional cases. Furthermore, the simplicity of Pauli propagation algorithms along with their improved performance for many quantum channels make them a compelling addition to near-Clifford simulation techniques.

\section*{Acknowledgements}
This work was supported by Dr. Scott Aaronson (UT Austin CS), who gave us invaluable advice throughout the project. We thank Dr. Antia Lamas-Linares (TACC) for giving us access to a plethora of supercomputing resources. We thank Dr. David Gross (Univ. of Colgone) as well as Dr. Earl Campbell, Dr. Mark Howard and James Seddon (Univ. of Sheffield) for useful suggestions and for coming up with creative names for some of the concepts introduced in this paper.  We thank David McKnight, DeVon Ingram and Adrian Trejo Nu\~nez for incisive editing feedback.

\appendix

\renewcommand\thetheorem{\Alph{section}.\arabic{theorem}}
\section{\label{sec:Postselective}Postselective Quantum Channels}

In this section we define postselective quantum channels. These include all trace preserving channels, and all `sensible' non-trace-preserving channels. Furthermore, there is a bijection between postselective channels taking $\mathcal{H}^A$ to $\mathcal{H}^B$ and density operators on $\mathcal{H}^B \otimes \mathcal{H}^A$, which is essential for FIG.~5 and Theorem~\ref{thm:stabprop}.

Completely positive maps $\Lambda$ with $0 \leq \text{Tr}(\Lambda(\rho)) \leq \text{Tr}(\rho)$ have an operational interpretation: the associated channels can `fail' or `abort' the computation by yielding 0. For example, let $\Lambda$ be the channel that measures in the $\sigma_Z$ basis and postselects on obtaining $\ket{0}$. Then $\Lambda(\ket{1}\bra{1}) = 0$, and $\Lambda(\ket{+}\bra{+}) = \frac{1}{2} \ket{0}\bra{0}$.

\begin{definition}
Let $\Lambda$ be a completely positive map from $\mathcal{H}^A$ to $\mathcal{H}^B$. Let $\ket{\text{Bell}_A} \in \mathcal{H}^A \otimes \mathcal{H}^A$ be a Bell state for $\mathcal{H}^A$, i.e. if $\{\ket{i}\}$ are a basis for $\mathcal{H}^A$ then:

\begin{equation}\ket{\text{Bell}_A} = \frac{1}{\sqrt{\text{dim}(\mathcal{H}^A)}} \sum_i \ket{i} \otimes \ket{i}  \end{equation}

    The \textbf{un-normalized Choi state} $\phi_\Lambda$ of $\Lambda$ is the resulting state when $\Lambda$ is applied to one half of $\ket{\text{Bell}_A}$.

\begin{equation}\phi_\Lambda = (\Lambda \otimes I)(\ket{\text{Bell}_A}\bra{\text{Bell}_A} ) \in \mathcal{H}^B \otimes \mathcal{H}^A\end{equation}

    $\text{Tr}(\phi_\Lambda$) of $\Lambda$ can be less than 1 if $\Lambda$ is not trace preserving. Let $\bar \phi_\Lambda = \phi_\Lambda/\text{Tr}(\phi_\Lambda)$ be the \textbf{normalized Choi state} with trace 1. This distinction is crucial.
\end{definition}

To calculate the output of a channel $\Lambda(\rho)$ given its Choi state $\phi_\Lambda$ we compute:
\begin{equation}\Lambda(\rho) = \text{dim}(\mathcal{H}^A)\cdot  \text{Tr}_A\left(\phi_\Lambda ( I \otimes \rho^T)\right)\label{eq:choiapply}\end{equation}

Crucially we use  $ \phi_\Lambda$, not $\bar \phi_\Lambda$. To explain why, consider a Choi state $\bar \phi_\Lambda = \ket{00}\bra{00}$. If we apply the equation above to $\bar \phi_\Lambda$ we obtain $\Lambda(\rho) = 2 \cdot \ket{0}\bra{0} \cdot \bra{0}\rho^T\ket{0}$, so $\Lambda(\ket{0}\bra{0}) = 2\ket{0}\bra{0}$ which makes no sense. The fact that $\phi_\Lambda$ is under-normalized takes care of this constant.

Given a normalized Choi state $\bar \phi_\Lambda$, e.g. $\ket{00}\bra{00}$, how do we determine $ \phi_\Lambda$? In general, $\phi_\Lambda$ is not unique. Consider channels $\Lambda(\rho)$ and $\Lambda'(\rho) = p\cdot0 + (1-p)\Lambda(\rho)$, i.e. $\Lambda'$ aborts with probability $p$ and otherwise applies $\Lambda$. Both channels have the same $\bar \phi_\Lambda$, but $\phi_{\Lambda'} = p \phi_\Lambda$.

However, $\Lambda'$ is somewhat silly: aborting the computation should be a tool for postselection and should not happen regardless of the input state. For all sensible channels there should exist an input state where the postselection succeeds with probability 1. To associate all $\bar \phi_\Lambda$ to a unique $\phi_\Lambda$ we restrict our attention to the following quantum channels.

\begin{definition}
    A completely positive map $\Lambda$ represents a \textbf{postselective quantum channel} if:
\begin{enumerate}
\item $\Lambda$ is trace-non-increasing: for all positive-semidefinite $\rho$, $\Lambda$ satisfies $0 \leq \text{Tr}(\Lambda(\rho)) \leq \text{Tr}(\rho)$,
\item the postselection can be satisfied: there exists a normalized pure state $\ket{\psi}$ such that $\text{Tr}(\Lambda(\ket{\psi}\bra{\psi})) = 1$.
\end{enumerate}
\end{definition}

Among these channels we can uniquely obtain $\phi_{\Lambda}$ from $\bar \phi_{\Lambda}$, so there is a bijection between normalized mixed states and postselective quantum channels. 
 Let  $ \phi_{\Lambda} = p_\Lambda \bar \phi_{\Lambda}$. Then:
\begin{equation} \frac{1}{p_\Lambda} = \text{dim}(\mathcal{H}^A)\cdot  \max_{\ket{\psi}} \text{Tr}\left( \bar\phi_{\Lambda} (I \otimes (\ket{\psi}\bra{\psi})^T)  \right) \label{eq:plambda}  \end{equation}

For example, if $\bar \phi_{\Lambda} = \ket{00}\bra{00}$ then $\ket{\psi} = \ket{0}$ maximizes $1/p_\Lambda$ at 2, so  $\phi_{\Lambda} = \frac{1}{2}\ket{00}\bra{00}$ and $\Lambda(\rho) = \ket{0}\bra{0} \cdot \bra{0}\rho^T\ket{0}$.
Incidentally, $p_\Lambda$ is the probability of postselection succeeding when $\Lambda$ is applied to the Bell state.

\section{\label{sec:R_channel_proof}Simulating Channels whose Choi States are Stabilizer Mixtures}

In this appendix we prove Theorem~\ref{thm:stabprop}: Stabilizer propagation can efficiently simulate a quantum channel $\Lambda$ if and only if the robustness of its Choi state $\mathcal{R}(\phi_\Lambda)$ is 1. This criterion also captures postselective quantum channels, and thereby all sensible non-trace-preserving channels.

All results of \cite{seddon} generalize neatly to postselective channels. Assuming familiarity with the work, the definition of magic capacity $\mathcal{C}(\Lambda)$ remains identical and the channel robustness $\mathcal{R}_*(\Lambda)$ can be obtained via convex optimization over linear combinations of \emph{un-normalized} Choi states of stabilizer channels. It is easy to see that Theorem 2, $\mathcal{R}(\phi_\Lambda) \leq \mathcal{C}(\Lambda) \leq \mathcal{R}_*(\Lambda)$, still holds. Their Lemma 2, $\mathcal{R}(\bar \phi_\Lambda) = 1$ implies $\mathcal{C}(\Lambda) = 1$, is our Theorem 3.2.

\newtheoremstyle{named}{}{}{\itshape}{}{\bfseries}{.}{.5em}{#1 \thmnote{#3}}
\theoremstyle{named}
\newtheorem*{namedthm}{Theorem}

\begin{namedthm}[\ref{thm:stabprop} (rephrased)]
Consider a postselective channel $\Lambda: \mathcal{H}^A \to \mathcal{H}^B$. The following statements are equivalent.
\begin{enumerate}
\item The channel's normalized Choi state $\bar \phi_\Lambda$ is a probabilistic mixture of stabilizer states, so $\mathcal{R}(\bar\phi_\Lambda) = 1$.

\item If $\Lambda$ is applied to any subset of the qubits of any large stabilizer state $\ket{\psi}$, one can efficiently sample from a probability distribution over stabilizer states and `abort' whose mean is the resulting state.
\end{enumerate}
\end{namedthm}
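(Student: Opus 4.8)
The plan is to prove the two implications separately. The reverse direction, (2) $\Rightarrow$ (1), is essentially a matter of choosing the right test input, while the forward direction, (1) $\Rightarrow$ (2), is the substantive one and rests on a gate-teleportation implementation of $\Lambda$ built from a stabilizer decomposition of its Choi state.

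For (2) $\Rightarrow$ (1), I would apply the hypothesized sampling procedure to one half of the Bell state $\ket{\text{Bell}_A}$, which is itself a stabilizer state. By the definition of the Choi state, the mean of the resulting distribution is exactly the un-normalized Choi state $\phi_\Lambda$. Statement (2) guarantees this mean is produced by sampling stabilizer states (the `abort' outcomes contributing $0$), so $\phi_\Lambda = \sum_i r_i \ket{\phi_i}\bra{\phi_i}$ is a nonnegative combination of stabilizer states with $r_i \geq 0$. Because each $\ket{\phi_i}\bra{\phi_i}$ has unit trace, we have $\Tr(\phi_\Lambda) = \sum_i r_i$, and normalizing turns the coefficients $r_i/\sum_j r_j$ into a genuine probability distribution. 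Hence $\bar\phi_\Lambda$ is a stabilizer mixture and $\mathcal{R}(\bar\phi_\Lambda) = 1$.

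For (1) $\Rightarrow$ (2), I would start from a decomposition $\bar\phi_\Lambda = \sum_i q_i \ket{\phi_i}\bra{\phi_i}$ witnessing $\mathcal{R}(\bar\phi_\Lambda) = 1$, where $\{q_i\}$ is a probability distribution over stabilizer states $\ket{\phi_i}$ on the register $\mathcal{H}^B \otimes \mathcal{H}^A$. Since $\Lambda$ acts on a constant number of qubits this register is constant-size, so the decomposition can be computed once in advance. To apply $\Lambda$ to a designated subset of qubits (register $A'$) of a large stabilizer state $\ket{\psi}$, I would realize the channel-application formula \eqref{eq:choiapply} by the standard gate-teleportation protocol: sample $i \sim q_i$, adjoin the stabilizer resource state $\ket{\phi_i}$, and perform a Bell measurement between register $A$ and the target qubits $A'$, which is a Clifford circuit followed by Pauli measurements and automatically implements the transpose appearing in \eqref{eq:choiapply}. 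Each outcome either yields a Pauli-corrected stabilizer state on $B$ (joined to the untouched part of $\ket{\psi}$) or is recorded as an `abort'. Linearity of \eqref{eq:choiapply} over the decomposition ensures the mean of this process is $\Lambda$ applied to the chosen qubits, and efficiency follows because tracking $\ket{\psi}$, adjoining $\ket{\phi_i}$, the Bell measurement, and the Pauli corrections are all polynomial-time stabilizer operations in the Aaronson--Gottesman tableau formalism \cite{tableau}.

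The main obstacle is the bookkeeping of normalization and abort probabilities, so that the sampled distribution is correctly normalized and its mean is exactly $\Lambda$ rather than a rescaling. The Choi formula \eqref{eq:choiapply} carries a factor $\dim(\mathcal{H}^A)$, and because $\Lambda$ is only postselective the un-normalized Choi state satisfies $\phi_\Lambda = p_\Lambda \bar\phi_\Lambda$ with $p_\Lambda$ fixed by \eqref{eq:plambda}; these constants must be absorbed into the Bell-measurement success probabilities and the abort branch. This is precisely where the postselective-channel machinery of Appendix~A is needed: it pins down the unique $\phi_\Lambda$ associated to $\bar\phi_\Lambda$ and guarantees that the probability of a non-abort outcome---equivalently, the probability that the postselection succeeds---matches the trace of the intended output, so that the procedure reproduces $\Lambda$ exactly.
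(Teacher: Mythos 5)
Your proof of (2)~$\Rightarrow$~(1) is essentially identical to the paper's: apply the hypothesized sampler to half of $\ket{\text{Bell}_A}$, observe the mean is $\phi_\Lambda$, and renormalize the nonnegative weights into a probability distribution. That direction is fine.

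The (1)~$\Rightarrow$~(2) direction has a genuine gap. Gate teleportation with Pauli byproduct corrections only implements $\Lambda$ when the Kraus operators are Clifford unitaries. A pure stabilizer Choi state $\ket{\phi_i}$ corresponds to a Kraus operator $K_i$ that is in general a Clifford composed with a projection onto a stabilizer subspace, and for such $K_i$ there is no Pauli $\tau$ on $\mathcal{H}^B$ with $K_i \sigma = \tau K_i$. Concretely, take $K = \ket{0}\bra{0}$, whose normalized Choi state is the stabilizer state $\ket{00}$, and input $\rho = \ket{1}\bra{1}$: the intended output is $\ket{0}\bra{0}\rho\ket{0}\bra{0} = 0$, but the Bell outcome with byproduct $\sigma_X$ occurs with probability proportional to $\bra{1}\rho\ket{1} = 1$ and leaves a nonzero state on $B$, so the mean of your "Pauli-corrected" ensemble is not $\Lambda(\rho)$. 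You could salvage the teleportation by keeping only the trivial Bell outcome, declaring all others `abort', and rescaling the surviving probabilities by the constant $\dim(\mathcal{H}^A)^2$ (valid because $\Gamma_i$ is trace-non-increasing), but that is not the protocol you describe, and your stated bookkeeping concerns ($p_\Lambda$ and the $\dim(\mathcal{H}^A)$ factor in \eqref{eq:choiapply}) do not touch this issue. The paper sidesteps it entirely: it never runs a physical circuit, but instead classically evaluates each term $\tilde\Gamma_i(\ket{\psi}\bra{\psi})$ of the decomposition as a stabilizer inner product/projection in the tableau formalism, computes its trace $\Tr(\tilde\Gamma_i(\ket{\psi}\bra{\psi}))$ explicitly, and outputs the normalized stabilizer state $\ket{\gamma_i}$ with probability $p_\Lambda \frac{p_i}{p_{\Gamma_i}}\Tr(\tilde\Gamma_i(\ket{\psi}\bra{\psi}))$, assigning the remaining mass to `abort'.
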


\begin{proof}[Proof: 2. implies 1.]
 Say a channel $\Lambda$ is simulable. Apply $\Lambda$ to one half of the state $\ket{\text{Bell}_A}$, a stabilizer state. The resulting Choi state is probabilistic mixture of stabilizer states and `abort':

\begin{eqnarray}\phi_\Lambda = p_0 \cdot 0 + \sum_i p_i \ket{\phi_i}\bra{\phi_i}\\
\bar\phi_\Lambda =  \frac{\phi_\Lambda}{\text{Tr}(\phi_\Lambda)}  = \frac{1}{1-p_0} \sum_i p_i \ket{\phi_i}\bra{\phi_i}\end{eqnarray}
Since $p_i / (1-p_0)$ is a probability distribution, $\bar\phi_\Lambda$ is also a probabilistic mixture of stabilizer states.
\end{proof}

\begin{proof}[Proof: 1. implies 2.] Say we are given
\begin{equation}\bar\phi_{\Lambda} = \sum_i p_i \bar\phi_{\Gamma_i}\end{equation}
where $\bar \phi_{\Gamma_i}$ are \emph{pure} stabilizer states with corresponding pure operations $\Gamma_i$. Our goal is to obtain an efficiently computable probability distribution over stabilizer states and `abort' of $\Lambda$ applied to some subset of the qubits of a stabilizer state $\ket{\psi}$. The channel acts on a constant number of qubits, so we can compute anything we want about it. The stabilizer state, however, may live in a Hilbert space of exponential dimension. Using $ \phi_{\Lambda} = p_\Lambda \bar \phi_{\Lambda}$:
\begin{equation} \phi_{\Lambda} = p_\Lambda \sum_i \frac{p_i}{p_{\Gamma_i}}  \phi_{\Gamma_i} \end{equation}
All of the quantities $p_\Lambda, p_i$ and $p_{\Gamma_i}$ can be obtained quickly. Now we apply (\ref{eq:choiapply}), but we extend $\Lambda$ and $\Gamma_i$ from the constant size Hilbert space to $\tilde\Lambda$ and $\tilde \Gamma_i$ which act on the large Hilbert space containing $\ket{\psi}$.
\begin{equation}\tilde\Lambda(\ket{\psi}\bra{\psi}) = p_\Lambda \sum_i \frac{p_i}{p_{\Gamma_i}} \tilde \Gamma_i(\ket{\psi}\bra{\psi})\end{equation}

    Crucially, $\tilde \Gamma_i(\ket{\psi}\bra{\psi})$, the right-hand side of (\ref{eq:choiapply}), is an inner product between pure stabilizer states $\phi_\Gamma$ and $\ket{\psi}$ and is therefore a pure stabilizer state that can be computed in polynomial time. Since $\tilde \Gamma_i$ may be non-trace-preserving, $\tilde \Gamma_i(\ket{\psi}\bra{\psi})$ may not be normalized. Let $\ket{\gamma_i}$ be the normalized pure stabilizer state:
    \begin{equation}
\ket{\gamma_i}\bra{\gamma_i} = \tilde \Gamma_i(\ket{\psi}\bra{\psi})\Big/ \text{Tr}(\tilde \Gamma_i(\ket{\psi}\bra{\psi}))\hspace{5mm}
    \end{equation}
We write $\tilde\Lambda(\ket{\psi}\bra{\psi})$ as a weighted sum over normalized pure stabilizer states $\ket{\gamma_i}$.
\begin{equation}\tilde\Lambda(\ket{\psi}\bra{\psi}) =  \sum_i p_\Lambda \frac{p_i}{p_{\Gamma_i}} \text{Tr}(\tilde \Gamma_i(\ket{\psi}\bra{\psi})) \cdot \ket{\gamma_i}\bra{\gamma_i}\hspace{5mm}\end{equation}

The weights are positive and one can see that they sum to less than 1 by taking the trace of both sides. Furthermore since $\bar\phi_{\Gamma_i}$ are pure stabilizer states, the number $ \text{Tr}(\tilde \Gamma_i(\ket{\psi}\bra{\psi})) $ and stabilizer state $\ket{\gamma_i}\bra{\gamma_i}$ are efficiently computable.

Thus, to simulate $\Lambda$ acting on $\ket{\psi}$ we sample:
\begin{eqnarray}\ket{\gamma_i}\bra{\gamma_i}  &\text{ w.p. }& p_\Lambda \frac{p_i}{p_{\Gamma_i}} \text{Tr}(\tilde \Gamma_i(\ket{\psi}\bra{\psi})) \\ 
 0 &\text{ w.p. }& 1- \sum_i p_\Lambda \frac{p_i}{p_{\Gamma_i}} \text{Tr}(\tilde \Gamma_i(\ket{\psi}\bra{\psi})).\hspace{1mm}\square\end{eqnarray}
 \renewcommand{\qedsymbol}{}
\end{proof}

\end{document}